\newcommand{\beq}{\begin{equation}}
\newcommand{\eeq}{\end{equation}}
\newcommand{\vast}{\bBigg@{4}}
\newcommand{\Vast}{\bBigg@{5}}
\newtheorem{theorem}{Theorem}[section]
\newtheorem{proposition}[theorem]{Proposition}
\newtheorem{corollary}[theorem]{Corollary}
\newtheorem{lemma}[theorem]{Lemma}
\theoremstyle{definition}
\newtheorem{definition}[theorem]{Definition}
\theoremstyle{remark}
\newtheorem{remark}[theorem]{Remark}
\newcommand{\ep}{\epsilon}
\renewcommand{\appendix}[1]{
    \addtocounter{section}{1}
    \setcounter{equation}{0}
    \renewcommand{\thesection}{\Alph{section}}
    \section*{Appendix \thesection\protect\indent #1}
    \addcontentsline{toc}{section}{Appendix \thesection\ \ \ #1}
}
\newcommand\encadremath[1]{\vbox{\hrule\hbox{\vrule\kern8pt
\vbox{\kern8pt \hbox{$\displaystyle #1$}\kern8pt}
\kern8pt\vrule}\hrule}}
\def\enca#1{\vbox{\hrule\hbox{
\vrule\kern8pt\vbox{\kern8pt \hbox{$\displaystyle #1$}
\kern8pt} \kern8pt\vrule}\hrule}}
\newcommand\figureframex[3]{
\begin{figure}[bth]
\hrule\hbox{\vrule\kern8pt
\vbox{\kern8pt \vbox{
\begin{center}
{\mbox{\epsfxsize=#1.truecm\epsfbox{#2}}}
\end{center}
\caption{#3}
}\kern8pt}
\kern8pt\vrule}\hrule
\end{figure}
}
\newcommand\figureframey[3]{
\begin{figure}[bth]
\hrule\hbox{\vrule\kern8pt
\vbox{\kern8pt \vbox{
\begin{center}
{\mbox{\epsfysize=#1.truecm\epsfbox{#2}}}
\end{center}
\caption{#3}
}\kern8pt}
\kern8pt\vrule}\hrule
\end{figure}
}
\newcommand{\bea}{\begin{eqnarray}}
\newcommand{\eea}{\end{eqnarray}}
\renewcommand{\and}{{\qquad {\rm and} \qquad}}
 \newcommand{\Tr}{{\,\rm Tr}\:}
\newcommand{\tr}{{\,\rm tr}\:}
\newcommand{\Pint}{{\int\kern -1.em -\kern-.25em}}
\renewcommand{\thesection}{\arabic{section}}
\newcommand{\brem}{\begin{remark}\rm\small}
\newcommand{\er}{\end{remark}}
\newcommand{\bt}{\begin{theorem}}
\newcommand{\et}{\end{theorem}}
\newcommand{\bd}{\begin{definition}}
\newcommand{\ed}{\end{definition}}
\newcommand{\bp}{\begin{proposition}}
\renewcommand{\ep}{\end{proposition}}
\newcommand{\bl}{\begin{lemma}}
\newcommand{\el}{\end{lemma}}
\newcommand{\bc}{\begin{corollary}}
\newcommand{\ec}{\end{corollary}}
\newcommand{\beaq}{\begin{eqnarray}}
\newcommand{\eeaq}{\end{eqnarray}}
\newcommand\R{\mathcal{R}}
\def\boxSize{20}
\def\boxPic{
  \put(0,0){
    \put(0,0){\line(0,-1){\boxSize}}
    \put(0,0){\line(1,0){\boxSize}}
    \put(\boxSize,0){\line(0,-1){\boxSize}}
    \put(0,-\boxSize){\line(1,0){\boxSize}}
  }
}
\def\diagBoxPic{
  \put(0,0){
    \put(0,10){\line(1,-1){10}}
    \put(0,10){\line(-1,-1){10}}
    \put(0,-10){\line(1,1){10}}
    \put(0,-10){\line(-1,1){10}}
  }
}
\begin{document}

\vspace{-12cm}

\begin{center}
  \hfill MIPT/TH-26/24\\
  \hfill ITEP/TH-33/24\\
  \hfill IITP/TH-28/24
\end{center}


\title{Direct proof of one-hook scaling property for Alexander polynomial \\
  from Reshetikhin-Turaev formalism}

\author[And.~Morozov]{And.~Morozov}
\address{And.~M.: NRC ``Kurchatov Institute'', IITP, MIPT}
\email{andrey.morozov@itep.ru}

\author[A.~Popolitov]{A.~Popolitov}
\address{A.~P.: MIPT, IITP, NRC ``Kurchatov Institute''}
\email{popolit@gmail.com}

\author[A.~Sleptsov]{A.~Sleptsov}
\address{A.~S.: MIPT, IITP, NRC ``Kurchatov Institute''}
\email{sleptsov@itep.ru}

\begin{abstract}
  We prove that normalized colored Alexander polynomial (the $A\rightarrow 1$ limit of colored HOMFLY-PT polynomial)
  evaluated for one-hook (L-shape) representation $R$ possesses
  \textit{scaling property}: it is equal to the fundamental Alexander polynomial
  with the substitution $q \rightarrow q^{|R|}$. The proof is simple and direct use of Reshetikhin-Turaev formalism
  to get all required $\mathcal{R}$-matrices.
\end{abstract}

\maketitle

{\section{Introduction}\label{sec:introduction}
  Today, knot theory is very active and flourishing branch of mathematics, having immediate connections
  not only to other branches of mathematics, such as quantum group representation theory
  \cite{book:KC-quantum-calculus,
    book:C-group-theory-birdtracks,
    book:K-quantum-groups,
    book:KS-quantum-groups-and-their-representations,
    paper:C-jones-polynomials-of-torus-knots-via-daha}
  and the theory of quantum computation \cite{paper:K-graphical-calculus-canonical-bases-and-kl-theory,
    book:B-quantum-computing-for-everyone,
    book:Y-an-introduction-to-quantum-computing},
  but also to theoretical physics,
  where its functional integral counterpart -- the Chern-Simons theory
  \cite{paper:CS-characteristic-forms-and-geometric-invariants}
  -- is currently at the border between known and unknown:
  the more-or-less completely well-understood 2D conformal field theories
  \cite{book:FMS-conformal-field-theory}
  and still relatively mysterious real-world 4D quantum field theories
  \cite{paper:S-persistent-challenges-of-quantum-chromodynamics}.
  
  Central objects of knot theory are the so-called polynomial knot invariants: multivariate Laurent polynomials,
  algorithmically constructed from a knot and invariant under its ambient isotopy
  \cite{paper:A-the-knot-book}.
  Hence, differing knot polynomials mean that two knots are inequivalent.
  
  Perhaps, the most widely known knot polynomial is HOMFLY-PT polynomial
  \cite{paper:HOMFLY-a-new-polynomial-invariant,
    paper:PT-invariants-of-links-of-conway-type}.
  In its simplest, fundamental, version it can be calculated for any knot using nothing but a simple skein relation
  and Reidemeister moves. However, fundamental HOMFLY-PT is not the ``complete'' knot invariant, that is,
  not all knot classes are distinguished by it. For instance, any pair of so-called mutant knots has same HOMFLY-PT
  \cite{paper:BDGMMMRSS-distinguishing-mutant-knots}.

  Of real interest are, hence, \textit{colored} HOMFLY-PT polynomials
  \cite{paper:B-overview-of-knot-invariants-at-roots-of-unity,
    paper:BMM-invariants-of-knots-and-links-at-roots-of-unity}.
  Not only do they start to distinguish
  mutants, they also are the values of Wilson loop averages in Chern-Simons theory
  \cite{paper:Witten}.  The monodromy trace of a Wilson loop should be taken in some representation,   and this corresponds to the color of the HOMFLY-PT.

  In principle, any concrete colored HOMFLY-PT can be calculated using the Reshetikhin-Turaev
  formalism \cite{paper:MS-cs-theory-in-the-temporal-gauge}:
  one just needs to take the quantum trace of the product of $R$-matrices in suitable representations.
  Here, however, lies the difficulty: the necessary colored $R$-matrices or, equivalently, the Racah-Wigner matrices 
  (assembled from Racah W-coefficients \cite{paper:R-theory-of-complex-spectra-ii},
  also known in different context under the name of Wigner's 6j-symbols
  \cite{paper:W-on-the-matrices-which-reduce})
  that diagonalize them, are not known in general and their straightforward calculation, so to say, from the definition,
  for instance, with help of the highest weight method, is computationally complex
  \cite{    paper:MMMS-quantum-racah-matrices-and-3-strand-braids-in-irreps-with4,
    paper:BJLMMMS-quantum-racah-matrices-upto-level-3}.
  In practice this means that, even with modern computers, colored HOMFLY-PT can be calculated
  for generic knot only for representations $\lambda$ with small number of boxes
  (see \cite{site:knotebook} for current progress).
  
  At the same time, (quantum) Racah-Wigner matrix, being a representation theoretic object,
  is needed also outside of knot theory context. For instance, it is a building block
  for universal gates of topological quantum computer
  \cite{paper:KM-quantum-r-matrices-as-universal-qubit-gates}. Moreover, in the TQFT approach
  to invariants of 3D manifolds, it is 6j-symbol that is associated
  to the ideal hyperbolic tetrahedron -- the building block of 3D hyperbolic geometry
  \cite{paper:turaev1992state}.
  
  Therefore, some indirect way to calculate Racah-Wigner matrix,
  based on some emergent properties,
  or hidden symmetries, is highly desirable. One way to discover such symmetries is to look
  for the symmetries of the polynomial knot invariants themselves
  \cite{paper:MST-a-novel-symmetry-of-colored-homfly,
    paper:MST-a-new-symmetry-of-colored-alexander}.

  And indeed, this line of thought had been, at least partially, successful. One can calculate 6j-symbols from a small class of polynomial knot invariants \cite{paper:MFS, paper:MMS-colored-homfly-polynomials-for-the-pretzel-knots-and-links,
    paper:GMMMS-colored-knot-polynomials-for-arbitrary-pretzel-knots-and-links} and then using effective field-theory descriptions extend calculation of colored HOMFLY-PT polynomials to a huge class of knots \cite{paper:MMMRS-colored-homfly-polynomials-for-knots-presented-as-double-fat-diagrams, paper:MMMRSS-tabulating-knot-polynomials-for-arborescent-knots}.

One more powerful approach is due to the celebrated eigenvalue conjecture
  \cite{    paper:IMMM-eigenvalue-hypothesis-for-racah-matrices, paper:DMMMRSS-eigenvalue-hypothesis-for-multistrand-braids, paper:DMMMRSS-multicolored-links-from-3-strand-braids}: the statement that any $R$-matrix (and, hence, corresponding Racah-Wigner matrices)
  block is, in fact, uniquely determined by the set of its eigenvalues.
  All of these properties have allowed to push boundaries of possible colored HOMFLY-PT calculations, but there is still much more to be done.

  Some of the emergent properties are first observed in a simplified version of HOMFLY-PT
  polynomial -- the Alexander polynomial. From HOMFLY-PT perspective, it is just
  the $A \rightarrow 1$ limit, but historically Alexander polynomial appeared first
  \cite{book:C-knots-and-links},
  and it has an independent definition in terms of the fundamental group
  of the torus vicinity of a knot.
  The properties, that have been observed recently, include block symmetry
  \cite{paper:MST-a-novel-symmetry-of-colored-homfly,
    paper:MST-a-new-symmetry-of-colored-alexander}
  and the \textit{1-hook scaling property}
  \cite{paper:IMMM-homfly-and-superpolynomials-for-figure-eight-knot-in-all-symmetric-and-anti-reps,
    paper:MM-eigenvalue-conjecture-and-colored-alexander-polynomials},
  which is the main focus of the present paper.

  The 1-hook scaling property is a statement that colored Alexander polynomial, evaluated for an L-shape (that is, one-hook)
  representation $\lambda$ is related to the fundamental Alexander polynomial by simple substitution
  \begin{theorem} \label{thm:alexander-trivialization}
    \begin{align}
      \mathcal{A}_\lambda(q) = \lim_{A \rightarrow 1} H(A,q) = \mathcal{A}_\Box (q^{|\lambda|}),
    \end{align}
    where $\lambda$ is a one-hook partition \eqref{eq:one-hook-partition}
    and HOMFLY-PT polynomial is defined with help of Reshetikhin-Turaev formula \eqref{eq:homfly-definition}
  \end{theorem}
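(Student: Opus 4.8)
\textit{Plan of proof.}
The plan is to present an arbitrary knot $K$ as the closure $\widehat{\beta}$ of a braid $\beta\in B_m$ and to evaluate the Reshetikhin--Turaev formula \eqref{eq:homfly-definition} head-on. In this form $H(A,q)=\frac{1}{\dim_q V_\lambda}\,\mathrm{Tr}_q\!\big(\mathcal{R}^{\mathrm{eff}}_\beta\big)$, where $\mathcal{R}^{\mathrm{eff}}_\beta$ is built by substituting for every braid generator the framing-corrected colored $\mathcal{R}$-matrix $\mathcal{R}^{\mathrm{eff}}_\lambda=\theta_\lambda^{-1}\mathcal{R}_{V_\lambda,V_\lambda}$ (or its inverse) on the relevant pair of tensor factors of $V_\lambda^{\otimes m}$, with $\theta_\lambda$ the ribbon twist and $\mathrm{Tr}_q$ the quantum trace. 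Decomposing $V_\lambda^{\otimes m}=\bigoplus_\nu V_\nu\otimes M_\nu$ into isotypic pieces ($M_\nu$ the $B_m$-multiplicity space) turns this into $\sum_\nu\frac{\dim_q V_\nu}{\dim_q V_\lambda}\,\mathrm{tr}_{M_\nu}\!\rho_\nu(\beta)$, so the whole task reduces to controlling this sum as $A\to1$. For that one first has to write $\mathcal{R}_{V_\lambda,V_\lambda}$ down explicitly from the Reshetikhin--Turaev prescription for a one-hook $\lambda$ --- this is the computational input the argument rests on.

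The organizing observation is the behaviour of quantum dimensions: since $\dim_q V_\mu=\prod_{(i,j)\in\mu}\{Aq^{\,j-i}\}/\{q^{\,h(i,j)}\}$ with $\{x\}:=x-x^{-1}$, and only the \emph{diagonal} boxes $(i,i)$ produce a factor $\{A\}$ vanishing at $A=1$, one has $\dim_q V_\mu\sim(\mathrm{const})\,\{A\}^{\,d(\mu)}$, where $d(\mu)$ is the number of diagonal boxes (the Durfee size), and $d(\mu)=1$ holds precisely when $\mu$ is a hook. In particular a one-hook $\lambda=[a+1,1^{b}]$ has $d(\lambda)=1$, with $\dim_q V_\lambda=\tfrac{(-1)^{b}}{\{q^{|\lambda|}\}}\{A\}+O(\{A\}^{2})$; hence, after dividing by $\dim_q V_\lambda\sim\{A\}$, every term with $d(\nu)\ge2$ dies as $A\to1$ and only hook-shaped $\nu$ survive. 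Moreover, because every irreducible constituent of a tensor product contains each tensor factor and Durfee size is monotone under containment of diagrams, whenever the target $\nu$ is a hook every pairwise fusion of two adjacent factors is again a hook; so on this surviving ``hook part'' of $V_\lambda^{\otimes m}$ the generator $\mathcal{R}^{\mathrm{eff}}_i$ only sees the two eigenvalues attached to the two hook components of $V_\lambda\otimes V_\lambda$.

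Next I would run a short content/sign computation (this is where the explicit $\mathcal{R}$-matrices enter) to show that those two hook components of $V_\lambda\otimes V_\lambda$ carry $\mathcal{R}^{\mathrm{eff}}$-eigenvalues that, as $A\to1$, become $s\,q^{|\lambda|}$ and $-s\,q^{-|\lambda|}$ for a single sign $s=s(\lambda)\in\{\pm1\}$ --- i.e.\ exactly $s$ times the eigenvalues $q,-q^{-1}$ of the fundamental $\mathcal{R}^{\mathrm{eff}}_\Box$ after $q\mapsto q^{|\lambda|}$. A semisimple operator with eigenvalues $v,-v^{-1}$ obeys $R-R^{-1}=(v-v^{-1})\,\mathbf{1}$, so $\mathcal{R}^{\mathrm{eff}}_i-(\mathcal{R}^{\mathrm{eff}}_i)^{-1}=(v_\lambda-v_\lambda^{-1})\,\mathbf{1}$ on the hook part with $v_\lambda=s\,q^{|\lambda|}$. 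Inserting this difference back into the trace --- changing one crossing at a time and discarding the $d\ge2$ contributions, which vanish in the limit --- shows that $\mathcal{A}_\lambda$ satisfies the HOMFLY--Conway skein relation with variable $z=v_\lambda-v_\lambda^{-1}=s\,(q^{|\lambda|}-q^{-|\lambda|})$ together with the normalization $\mathcal{A}_\lambda(\mathrm{unknot})=1$. These two properties pin the invariant down uniquely (it is the Conway polynomial $\nabla_K$), and since the Conway polynomial of a \emph{knot} is an even polynomial in $z$ the sign $s$ is immaterial; therefore $\mathcal{A}_\lambda(q)=\nabla_K\!\big(q^{|\lambda|}-q^{-|\lambda|}\big)=\mathcal{A}_\Box(q^{|\lambda|})$, as claimed.

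The step I expect to be the main obstacle is the explicit and uniform construction of the colored $\mathcal{R}$-matrices $\mathcal{R}_{V_\lambda,V_\lambda}$ for an arbitrary one-hook $\lambda$ from the Reshetikhin--Turaev data, together with the verification that exactly two hook summands occur in $V_\lambda\otimes V_\lambda$ and that their contents and symmetry signs $\varepsilon_\mu$ produce precisely $s\,q^{|\lambda|}$ and $-s\,q^{-|\lambda|}$; keeping track of these signs, and of the estimate that the $d(\nu)\ge2$ part contributes nothing in the $A\to1$ limit, is the technical heart of the argument, while the remaining ingredients --- the elementary quantum-dimension asymptotics and the classical uniqueness of the skein invariant --- are routine.
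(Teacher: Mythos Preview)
Your approach is correct in outline and is genuinely different from the paper's. Both arguments begin identically: one shows that only hook constituents $\nu$ survive the limit $A\to 1$ (the Durfee--size / quantum--dimension count), and one checks via Littlewood--Richardson that any fusion tree ending in a hook passes only through hooks, so that each $\mathcal{R}^{(i)}$ restricted to the hook sector has exactly the two eigenvalues coming from the two hook summands of $\lambda\otimes\lambda$. This is the content of the paper's Proposition~\ref{prop:one-hook-sufficient} and Lemma~\ref{lem:tensor-square-eigenvalues}, and you identify it as your main computational input.

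From this shared starting point the two proofs diverge. The paper goes on to \emph{compute the $\mathcal{R}$-matrices themselves in the multiplicity spaces}: using the Yang--Baxter relation as a recursion on the one-hook Young graph (Lemmas~\ref{lem:the-very-first-doublet} and~\ref{lem:arbitrary-2-by-2-block}), it shows that every $2\times 2$ block equals, up to a framing scalar, the corresponding fundamental block with $q\mapsto q^{|\lambda|}$; combined with Lemma~\ref{lem:ratio-of-schurs} on the Schur ratio, this forces the entire Reshetikhin--Turaev sum to coincide termwise with the fundamental one after the substitution. You instead observe that a semisimple operator with eigenvalues $v,-v^{-1}$ satisfies $R-R^{-1}=(v-v^{-1})\mathbf 1$, deduce the Conway skein relation for $\mathcal{A}_\lambda$ with $z=q^{|\lambda|}-q^{-|\lambda|}$, and invoke uniqueness of the skein invariant. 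This is shorter and more conceptual, and in particular you never need to determine the off-diagonal entries of the $2\times 2$ blocks or run any Yang--Baxter recursion. What the paper's longer route buys is exactly that extra information: explicit closed formulas for the Racah/$\mathcal{R}$-matrix blocks in the one-hook sector, which is one of the paper's stated motivations (computing $6j$-symbols). Your route buys economy.

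Two small remarks. First, the ``main obstacle'' you anticipate is lighter than you suggest: for your argument you do not need the full colored $\mathcal{R}_{V_\lambda,V_\lambda}$, only its eigenvalues on the two hook summands and their signs, which is precisely Lemma~\ref{lem:tensor-square-eigenvalues}. Second, be careful with the framing normalisation: in the paper's vertical framing the two eigenvalues are $(-1)^l q^{c\pm|\lambda|}$ with $c=2a^2+2a-2l^2-2l$, so the scalar you must divide out to reach $\pm q^{\pm|\lambda|}$ is $q^{c}$; make sure your $\theta_\lambda^{-1}$ produces exactly this factor at $A=1$ (it does, once conventions are matched). Also note that your skein argument implicitly extends $\mathcal{A}_\lambda$ to links (all components coloured $\lambda$); the limit $A\to 1$ still exists there by the same Durfee--size count, so this causes no trouble.
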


  \begin{remark}
    This theorem~\ref{thm:alexander-trivialization} is non-trivial \textit{despite}
    widespread point-of-view that it is a simple corollary of
     factorization property for Alexander polynomials for satellite knots:
    \begin{align} \label{eq:satellite-factorization}
      \Delta_S(q) = \Delta_P(q) \cdot \Delta_C(q^w),
    \end{align}
    where $\Delta$ is the fundamental Alexander polynomial,
    $S$ is a satellite knot with ``pattern'' knot $P$ and ``companion'' knot $C$,
    and $w$ is the number of Dehn twists (windings) of the corresponding solid torus.
    This widespread point-of-view is, however, \textit{wrong}. Indeed, taken at face value,
    this reasoning should equally imply scaling property for arbitrary representations
    and not just 1-hook, which is straightforwardly verified to not be the case
    from the examples available
    \cite{paper:BDGMMMRSS-distinguishing-mutant-knots,
      paper:BJLMMMS-quantum-racah-matrices-upto-level-3,
      paper:SS-quantum-racah-matrices-and-3-strand-braids}.
    Second, the more accurate application of \eqref{eq:satellite-factorization}
    leads to the relation between peculiar \textit{sums} of colored Alexander polynomials
    for different representations $\lambda$, and not just isolated polynomials.
  \end{remark}
  
  Experimentally, the statement of Thm.~\ref{thm:alexander-trivialization}
  was known for rather long time,
  moreover, in \cite{paper:MM-eigenvalue-conjecture-and-colored-alexander-polynomials}
  it was proved to be true, provided eigenvalue conjecture holds;
  and in \cite{paper:Z-colored-homfly-via-skein-theory} it was proved in the case of torus knots $T[m,n]$.
  Since 1-hook scaling property is very easy to observe experimentally, this provided additional evidence in favor of the eigenvalue conjecture.
  Furthermore, in papers \cite{paper:MMMMS-coloured-alexander-polynomials-and-kp-hierarchy,
    paper:MS-perturbative-analysis-of-the-colored-alexander-polynomial}
  the connection was discovered between
  colored 1-hook Alexander polynomials and one-soliton solutions of KP hierachy;
  where it was derived provided the 1-hook scaling property
  Th.~\ref{thm:alexander-trivialization} holds.
  In this paper we prove the 1-hook scaling property (Theorem~\ref{thm:alexander-trivialization}) from first principles, that is, directly applying the Reshetikhin-Turaev formalism,
  thus naturally completing the cycle of works
  \cite{paper:MMMMS-coloured-alexander-polynomials-and-kp-hierarchy,
    paper:MS-perturbative-analysis-of-the-colored-alexander-polynomial}.

  The crucial element of the proof is a clever application of the Yang-Baxter equation
  \cite{paper:T-the-yang-baxter-equation-and-invariants-of-links},
  of which we actually use only a small part. Nevertheless, it is sufficient to fix all the $R$-matrix blocks.
  We hope that the presented technique will be useful in more general calculations
  of Racah-Wigner matrix blocks.
  
  \bigskip

  Furthermore, while still being in the conjectural state, the 1-hook scaling property
  for Alexander polynomials inspired big research direction: the search for similar
  scaling symmetries for colored HOMFLY polynomials
  \cite{paper:MST-a-novel-symmetry-of-colored-homfly,
    paper:LST-chern-simons-perturbative-series-revisited,
    paper:LST-implications-for-colored-homfly,
    paper:LS-tug-the-hook-symmetry-for-quantum-6j-symbols},
  and for colored Alexander polynomials
  in more complicated representations
  \cite{paper:MST-a-new-symmetry-of-colored-alexander}. These recent and intriguing
  developments are currently still waiting to be proved using representation theory
  of quantum (super) groups.
  
  \bigskip
  
  This paper is organized as follows. In Section~\ref{sec:background} we review the necessary
  background on HOMFLY-PT and Alexander polynomials and the approach to their calculation using Young graph.
  In Section~\ref{sec:r-matrix-defs} we describe different $R$-matrices
  focusing on the ones we need for the present paper.
  Sections~\ref{sec:multiplication-of-two-1-hooks} and~\ref{sec:square-of-1-hook}
  are somewhat auxiliary and we build up necessary facts about the algebra of one-hook representations
  and how $R$-matrix acts in their products.
  In Section~\ref{sec:block-structure} we recall how block structure of $R$-matrices is determined
  from the Young graph. In Section~\ref{sec:one-hook-young-graph} we discuss the structure of one-hook
  sector of the Young graph and the precise form of $1 \times 1$ blocks of $R$-matrix.
  In Section~\ref{sec:first-two-by-two-block} we obtain formula for the first two by two block
  and in Section~\ref{sec:two-recursions-for-doublets} we establish two recursions
  on the Young graph, which allow us to find the form of generic two by two block on our Young graph of interest.
  In Section~\ref{sec:topological-schurs} we show that the ratio of Schur polynomials
  at topological locus in Reshetikhin-Turaev formula also simplifies in the limit $A \rightarrow 1$.
  The combination of lemmas from Sections~\ref{sec:background}-\ref{sec:topological-schurs}
  proves Theorem~\ref{thm:alexander-trivialization}.
}

{\section*{Acknowledgments}
  This work is supported by the Russian Science Foundation (Grant No.24-71-10058).
}

{\section{Background}\label{sec:background}
  In this section we describe the necessary background,
  the Reshetikhin-Turaev formula and the approach to calculation with help of Young graph
  \cite{paper:R-invariants-of-tangles-I,
    paper:AM-cabling-procedure,
    paper:AMMM-colored-homfly-polynomials-as-multiple-sums}.

  Given a knot $\mathcal{K}$ in a braid representation with $m$ strands,
  suppose that multiplicities of crossings in a braid are given by
  \begin{align}
    (a_{1,1}, a_{1,2}, \dots , a_{1,m-1}, a_{2,1}, \dots a_{2,m-1} \dots a_{n,m-1}),
  \end{align}
  that is,
  there are $a_{1,1}$ crossings between first and second strands of the braid, followed by $a_{1,2}$
  crossings between second and third and so on up to $a_{1,m-1}$ crossings between $m-1$th and $m$th strands,
  then followed again by $a_{2,n}$ crossings between first and second strand and so on.
  The negative number $a_{i,j}$ corresponds to $|a_{i,j}|$ inverse crossings.

  Then the normalized (or, in other words, reduced) HOMFLY-PT polynomial $H_\lambda(A,q)$,
  colored with some representation $\lambda$, can be
  calculated as a quantum trace of the product of corresponding $R$-matrices
  (see Section~\ref{sec:r-matrix-defs} for definitions of different $R$-matrices).
  We take this as a
  \begin{definition}
    \begin{align} \label{eq:homfly-definition}
      H_\lambda (A, q) = \sum_{\mu \vdash m|\lambda|} \frac{d_\mu}{d_\lambda}
    \Tr \prod_{i=1}^n \prod_{j=1}^{m-1} \left(\R^{(i)}_{\lambda,\mu} \right)^{a_{i,j}}.
    \end{align}
  \end{definition}
  The colored Alexander polynomial $\mathcal{A}_\lambda(q)$ is the $A \rightarrow 1$ limit of the corresponding
  HOMFLY-PT polynomial.
  \begin{definition}
    \begin{align} \label{eq:alexander-definition}
      \mathcal{A}_\lambda(q) = \lim_{A \rightarrow 1} H_\lambda(A,q)
    \end{align}
  \end{definition}
  
  Other definitions, consistent with this one, as possible. For instance, one can
  define fundamental HOMFLY-PT polynomial with help of skein relation
  \cite{paper:HOMFLY-a-new-polynomial-invariant}
  and then define colored polynomials through the cabling procedure
  \cite{paper:AM-cabling-procedure,
    paper:AMMM-colored-homfly-polynomials-as-multiple-sums,
    paper:DMMMRSS-eigenvalue-hypothesis-for-multistrand-braids}.

  The quantum dimension of representation $d_\mu$ (and also $d_\lambda$)
  is equal to the value of Schur polynomial at the so-called \textit{topological locus}
  \begin{align} \label{eq:quantum-dimension-is-schur}
    d_\mu = s_\mu \left (
    p_k = \frac{(A^k - A^{-k})}{(q^k - q^{-k})}
    \right ) =: s_\mu^*
  \end{align}
  and the quantities $\R_{\lambda,\mu}^{(i)}$ are $R$-matrices in the space of intertwiners,
  which we interchangeably call the space of multiplicities.

  The formula \eqref{eq:homfly-definition} is the essence of Reshetikhin-Turaev approach
  to HOMFLY-PT polynomial and to knot invariants in general
  \cite{paper:MS-cs-theory-in-the-temporal-gauge}.
  The sum in \eqref{eq:homfly-definition}
  is over all irreducible summands in $m$th tensor power of representation $\lambda$.

  In order to apply \eqref{eq:homfly-definition} one needs a way to
  calculate the $R$-matrices $\R_{\lambda,\mu}^{(i)}$ in the multiplicity (intertwiner) space.
  One of the bases in this space is manifestly given by comb diagrams
  (see Section~\ref{sec:block-structure} and also \cite{paper:R-invariants-of-tangles-I}),
  which can be conveniently depicted as paths on the so-called \textit{Young graph} for representation $\lambda$.
  More precisely, basis in the multiplicity space $\lambda,\mu$ is given by all paths joining
  the root $\lambda$ of the Young graph and $\mu$.
  The Young graph is a directed acyclic graph, consisting of layers. Vertices on each layer $i$ are
  the irreducible representations in $i$th tensor power of $\lambda$.
  The edge (possibly, with multiplicity) between some vertex $P$ on layer $i$ and some vertex $Q$ on layer $i+1$ occurs
  when $Q$ belongs to the tensor product of $P$ with $\lambda$.

  For instance, in case of fundamental representation $\lambda = \Box$
  the Young graph takes the following form
  
  \begin{align}
    \begin{picture}(300,130)(0,-130)
      \thicklines
      \def\boxSize{10}
      \put(150,0){\boxPic}
      \put(140,-20){\line(-1,-1){15}}
      \put(170,-20){\line(1,-1){15}}
      \put(250,-5){\qbezier(0,0)(10,-20)(0,-35)}
      \put(250,-40){\vector(-1,-2){0}}
      \put(270,-15){\put(-10,-8){$\otimes$} \boxPic}
      \put(100,-40){
        \put(0,0){\boxPic}
        \put(10,0){\boxPic}
      }
      \put(190,-40){
        \put(0,0){\boxPic}
        \put(0,-10){\boxPic}
      }
      \put(100,-60){\line(-1,-1){15}}
      \put(125,-60){\line(1,-1){15}}
      \put(180,-60){\line(-1,-1){15}}
      \put(210,-60){\line(1,-1){15}}
      \put(0,-50){
        \put(250,-5){\qbezier(0,0)(10,-20)(0,-35)}
        \put(250,-40){\vector(-1,-2){0}}
        \put(270,-15){\put(-10,-8){$\otimes$} \boxPic}
      }
      \put(70,-80){
        \put(0,0){\boxPic}
        \put(10,0){\boxPic}
        \put(20,0){\boxPic}
      }
      \put(150,-80){
        \put(0,0){\boxPic}
        \put(10,0){\boxPic}
        \put(0,-10){\boxPic}
      }
      \put(220,-80){
        \put(0,0){\boxPic}
        \put(0,-10){\boxPic}
        \put(0,-20){\boxPic}
      }
      \put(150,-120){$\dots$}
    \end{picture}
  \end{align}
  and this depicted part of the Young graph is sufficient to calculate fundamental HOMFLY-PT polynomial
  for any three-strand braid.
  
  In the fundamental case, one can prove that there is an explicit combinatorial
  formula for $\R$-matrices in terms of the diagrams on the Young graph
  \cite{paper:R-invariants-of-tangles-I,
    paper:AM-cabling-procedure,
    paper:AMMM-colored-homfly-polynomials-as-multiple-sums}.
  For the generic case, however, the calculation that leads from the general definition to the combinatorial
  formula is much more difficult to perform, see
  \cite{paper:DMMMRSS-eigenvalue-hypothesis-for-multistrand-braids}
  for current state of the art.

  However, even in generic case it is very easy to establish the block structure of the $\R$-matrices
  from the Young graph, see Section~\ref{sec:block-structure} for details.

  We \textbf{define} colored Alexander polynomial to be the $A \rightarrow 1$ limit of the HOMFLY-PT polynomial.
  In case $\lambda = \Box$ this definition coincides with the historical independent definition of the Alexander polynomial
  (see, for instance \cite{book:C-knots-and-links} and references therein).

  There is an explicit formula for the quantum dimension \eqref{eq:quantum-dimension-is-schur}
  as a product over boxes of the Young diagram \eqref{eq:topological-locus-schur}.
  It is easy to see (Section~\ref{sec:topological-schurs}) that only the diagrams $\mu$ whose
  main diagonal is equal to the main diagonal of $\lambda$ contribute
  to the Alexander polynomial. Otherwise the order of the zero in the numerator is greater than in the denominator,
  and the corresponding summand vanishes.

  Therefore, for the case we are interested in, namely $\lambda$ being one-hook (L-shape) diagram,
  the diagrams $\mu$ that contribute to the answer are also one-hook.

  The structure of the Littlewood-Richarsdon rule, that governs tensor products
  of $sl(n)$ representations is such, that if a representation $\mu$ occurs in a
  product of representations $\lambda$ and $\nu$, then Young diagrams of both $\lambda$ and $\nu$
  should be subsets of $\mu$.
  This means that if we are interested in one-hook representations $\mu$ then all the representations,
  belonging to the path leading from $\lambda$ to $\mu$ should be also one-hook.

  In other words, we have a
  \begin{proposition} \label{prop:one-hook-sufficient}
    In order to calculate colored Alexander polynomial
    for the one-hook representation $\lambda$ it is sufficient to consider only the one-hook subgraph
    of the $\lambda$'s Young graph, i.e. the subgraph that consists only of one-hook Young diagrams.
  \end{proposition}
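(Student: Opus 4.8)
The plan is to assemble the statement from the two ingredients already laid out in the paragraphs above: the behaviour of the quantum dimensions at the topological locus in the limit $A \to 1$, and the containment structure forced on the Young graph by the Littlewood--Richardson rule. First I would invoke the analysis of Section~\ref{sec:topological-schurs}: the coefficient $d_\mu / d_\lambda$ in \eqref{eq:homfly-definition}, being a ratio of Schur functions at the topological locus \eqref{eq:quantum-dimension-is-schur}, vanishes in the limit $A \to 1$ — to an order that the trace of the $R$-matrices cannot compensate — unless the main diagonal of $\mu$ equals the main diagonal of $\lambda$. Since a one-hook (L-shape) partition \eqref{eq:one-hook-partition} has main diagonal consisting of a single box, the only $\mu \vdash m|\lambda|$ that survive in $\mathcal{A}_\lambda$ are themselves one-hook.

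Second, for such a one-hook $\mu$ I would examine the individual summand $\Tr \prod_{i,j} (\R^{(i)}_{\lambda,\mu})^{a_{i,j}}$. This trace runs over the multiplicity space of $\mu$ in $\lambda^{\otimes m}$, whose basis consists of the paths $\lambda = P_1, P_2, \dots, P_m = \mu$ in the Young graph, and the $R$-matrices act inside this very space. Along any such path the edge from $P_k$ to $P_{k+1}$ means $P_{k+1} \in P_k \otimes \lambda$, so by the Littlewood--Richardson rule $P_k \subseteq P_{k+1}$; iterating gives $\lambda = P_1 \subseteq P_2 \subseteq \dots \subseteq P_m = \mu$, hence every $P_k$ is a sub-diagram of the one-hook diagram $\mu$ and is therefore itself one-hook, because every sub-diagram of an L-shape $[r,1^s]$ is again an L-shape. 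Thus the entire computation that produces $\mathcal{A}_\lambda$ only ever refers to vertices, edges and $R$-matrix blocks lying in the one-hook subgraph, which is exactly the assertion of the proposition.

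I do not expect a genuine obstacle here: the statement is essentially a bookkeeping consequence of the two inputs above, one of which is established independently later in the paper (the topological-locus vanishing, Section~\ref{sec:topological-schurs}) and the other of which is completely elementary (the inclusion $P_k \subseteq P_{k+1}$ from Littlewood--Richardson, together with the observation that sub-diagrams of a hook are hooks). The only point I would be careful to phrase precisely is the order-counting in the first step — that the zero of the Schur ratio at $A \to 1$ genuinely eliminates every non-hook $\mu$ and is not resurrected by a compensating pole coming from the trace of the $R$-matrices — and for that I would simply cite the lemma of Section~\ref{sec:topological-schurs}.
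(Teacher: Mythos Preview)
Your proposal is correct and follows essentially the same route as the paper: first use the vanishing of the Schur ratio at $A\to 1$ (Section~\ref{sec:topological-schurs}) to restrict to one-hook $\mu$, then use the containment $P_k \subseteq P_{k+1}$ from Littlewood--Richardson to force every intermediate vertex on a path to $\mu$ to be one-hook. Your extra caution about a possible compensating pole from the $\R$-matrix trace is reasonable but in fact harmless here, since the $\R$-matrices in this formulation carry no $A$-dependence at all.
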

  We analyze the structure of this subgraph in detail in Section~\ref{sec:one-hook-young-graph}.
  
}

{\section{The different R-matrices}\label{sec:r-matrix-defs}
  The notion of R-matrix is highly overloaded in the literature, moreover, even in the present paper
  we need several related, but distinct, operators each called an R-matrix in the appropriate context.
  Therefore, in this section we establish our notation for these different R-matrices.

  First of all, the \textit{universal} R-matrix $R_{\text{univ}}$,
  which is a solution to the Quantum Yang-Baxter Equation (for details and notation see \cite{paper:MS-cs-theory-in-the-temporal-gauge})
  \begin{align}
    R_{\text{univ}} = \hat{P} q^{\sum_{\alpha \in \Delta} h_\alpha \otimes h_{\alpha^V}}
    \prod_{\beta \in \Phi^+}^\rightarrow \exp_{q_\beta} \left ( (q_\beta - q_\beta^{-1}) E_\beta \otimes F_\beta \right ),
  \end{align}
  which is valid universally, on the quantum group level.\footnote{That is,
  Chevalley generators $h_\alpha$, $E_\alpha$ and $F_\alpha$ are required to
  obey only the quantum group commutation relations plus Serre relations and nothing else.}

  Under a choice of particular irreducible representation $\lambda$ the universal
  $R$-matrix is mapped to specific $\text{dim}(\lambda)^2 \times \text{dim}(\lambda)^2$
  matrix.\footnote{It is also possible to specify two distinct representations
  $\lambda$ and $\nu$ for each of the two intertwined strands, but this is
  relevant for the case of links, which we do not consider in the present paper.}

  Then, one can consider an image of the universal R-matrix in particular
  highest weight representation $\lambda$ (with space $V_\lambda$),
  thus obtaining a matrix $R_{\lambda \otimes \lambda}$,
  which is a $\text{dim}(\lambda)^2 \times \text{dim}(\lambda)^2$ matrix.
  For instance, in case of the fundamental representation of $\mathcal{U}_q(sl(2))$
  this matrix is equal to
  \begin{align}
    R_{\Box \otimes \Box} =
    \left(\begin{array}{cccc}
      q & 0 & 0 & 0 \\
      0 & q-q^{-1} & 1 & 0 \\
      0 & 1 & 0 & 0 \\
      0 & 0 & 0 & -q^{-1}
    \end{array} \right)
  \end{align}

  A non-trivial theorem (see \cite{paper:R-invariants-of-tangles-I})
  is that operator $R_{\lambda \otimes \lambda}$ is diagonal in each
  irreducible representation $\nu \in \lambda\otimes\lambda$; so it makes
  sense to work with $R$-matrices for different $N$ all at once,
  we call such operator $\mathcal{R}_\lambda$
  (see \eqref{eq:r-matrix-klebsch-decomposition})

  Finally, consider a set of $R$-matrices relevant for $n$-strand tangle/braid calculus
  \begin{align} \label{eq:r-matrix-predef-picture}
    \begin{picture}(300,30)(0,-30)
      \thicklines
      \put(60,-13){$\R^{(i)}_\lambda = \ $}
      \put(100,0){
        \put(0,0){\line(0,-1){20}}
        \put(0,-22){$\scriptscriptstyle 1$}
        \put(20,0){\line(0,-1){20}}
        \put(20,-22){$\scriptscriptstyle 2$}
        \put(30,-10){$\dots$}
        \put(50,0){
          \put(20,0){\line(-1,-1){20}}
          \put(0,0){\qbezier(0,0)(8,-8)(8,-8)}
          \put(20,-20){\qbezier(0,0)(-8,8)(-8,8)}
        }
        \put(50,-22){$\scriptscriptstyle i$}
        \put(70,-22){$\scriptscriptstyle i+1$}
        \put(80,-10){$\dots$}
        \put(100,0){\line(0,-1){20}}
        \put(100,-22){$\scriptscriptstyle m$}
      }
      \put(230,-15){
        $= I_\lambda \otimes \cdots \otimes \mathcal{R}_{\lambda} \otimes \cdots \otimes I_\lambda$
        \put(-55,-7){$\scriptscriptstyle i, i+1$}
      }
    \end{picture}
  \end{align}
  where $I_\lambda$ is the identity operator in representation $\lambda$.

  Yet another non-trivial theorem is that if one considers irreducible representation
  decomposition
  \begin{align}
    V_\lambda^{\otimes n} = \sum_{\mu \vdash n|\lambda|} \mathcal{M}_{\lambda,\mu} V_\mu,
  \end{align}
  where $\mathcal{M}_{\lambda,\mu}$ are multiplicities of irreducible representation
  $\mu$ in the tensor power $\lambda^{\otimes n}$, then operators
  $\R^{(i)}_\lambda$ do not mix  different irreducible representation $\mu$;
  moreover, they are non-trivial only in the multiplicity (intertwiner) spaces
  and not in the representation spaces themselves -- where they are just diagonal operators,
  with known eigenvalues.

  Therefore, we call these $R$-matrices, acting in the multiplicity space of particular
  irreducible representation $\mu$ $\mathcal{R}_{\lambda,\mu}^{(i)}$.
  These are presicely the operators, entering the Reshetikhin-Turaev formula
  for colored HOMFLY polynomial \eqref{eq:homfly-definition}.

  See Section~\ref{sec:block-structure} for the more in-detail description of the block
  structure of the operators $\mathcal{R}^{(i)}_\lambda$ and $\mathcal{R}^{(i)}_{\lambda,\mu}$.
}

{\section{The multiplication of two one-hook Young diagrams} \label{sec:multiplication-of-two-1-hooks}
  In this and subsequent sections we always assume that we are working in the one-hook sector.
  For instance, all the statements about tensor products, even if not stated explicitly,
  should be understood to hold in the sense of the projection to the one-hook sector.

  One-hook (L-shape) Young diagram is the following collection of boxes.

  \begin{align} \label{eq:one-hook-partition}
    \begin{picture}(300,120)(0,-120)
      \put(50,0){
        \thinlines
        \put(0,0){\line(0,-1){30}}
        \put(0,-30){\vector(0,-1){0}}
        \put(-5,-5){\line(1,0){30}}
        \put(25,-5){\vector(1,0){0}}
        \put(-5,-35){$i$}
        \put(20,-15){$j$}
      }
      \put(100,0){
        \put(0,0){\boxPic}
        \put(0,-20){\boxPic}
        \put(0,-40){\boxPic}
        \put(0,-80){\boxPic}
        \put(20,0){\boxPic}
        \put(40,0){\boxPic}
        \put(80,0){\boxPic}
        \put(65,-10){$\dots$}
        \put(5,-70){$\dots$}
        \put(20,5){\thicklines \qbezier(0,0)(40,10)(80,0) \put(35,7){$a$}}
        \put(-5,-20){\thicklines \qbezier(0,0)(-10,-40)(0,-80) \put(-12,-40){$l$}}
      }
    \end{picture}
  \end{align}
  which we also denote by $(a, l)$. That is, 1-hook Young diagram $(a,l)$ has $a + l + 1$ boxes.

  The one-hook sector of the tensor product of two representations,
  corresponding to one-hook diagrams
  is decomposed into irreducibles in very simple way.
  In fact, we have the following

  \begin{lemma}\label{lem:multiplication-of-two-1-hooks}
    \begin{align}
      (a_1, l_1) \otimes (a_2, l_2) =\Bigg{|}_{\substack{\text{one hook}\\\text{sector}}}
      (a_1 + a_2 + 1, l_1 + l_2) \oplus (a_1 + a_2, l_1 + l_2 + 1)
    \end{align}
  \end{lemma}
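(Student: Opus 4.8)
The plan is to compute the one-hook part of the tensor product $(a_1,l_1)\otimes(a_2,l_2)$ directly via the Littlewood--Richardson rule, and then argue that only the two claimed hooks survive. Recall that a one-hook (L-shape) diagram $(a,l)$ is the partition $(a+1,1^l)$, i.e. a single row of length $a+1$ together with a single column of length $l+1$ sharing one corner box, so $|(a,l)| = a+l+1$. First I would reduce to the case where one of the factors is a single box or a single row/column: since $(a_2,l_2)$ is a hook, its Schur function can be handled by the Giambelli/Jacobi--Trudi determinant for hooks, $s_{(a_2+1,1^{l_2})}$, but it is cleaner to argue combinatorially. A semistandard skew tableau of shape $\mu/(a_1,l_1)$ and content $(a_2,l_2)=(a_2+1,1^{l_2})$ must place $a_2+1$ copies of the letter $1$ and one copy each of the letters $2,\dots,l_2+1$; the lattice-word (Yamanouchi) condition forces these letters to appear in a very constrained pattern.

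The key step is the following shape constraint: if $\mu$ contains $(a_1,l_1)$ and $\mu/(a_1,l_1)$ admits such a tableau with $\mu$ \emph{itself a hook}, then the skew shape $\mu/(a_1,l_1)$ is a disjoint union of at most a horizontal strip (extending the first row) and a vertical strip (extending the first column), since any box of $\mu$ outside row $1$ and column $1$ would force $\mu$ to be non-hook. Placing the $a_2+1$ ones along the row extension and the $l_2$ letters $2,\dots,l_2+1$ down the column extension (the $1$ at the corner of $(a_2,l_2)$ going wherever it fits) and checking the column-strictness and lattice conditions leaves exactly two possibilities: either the corner box of the added hook lies in row $1$ — giving the new first row length $(a_1+1)+(a_2+1)$ and new first column length $l_1+l_2+1$, i.e. the hook $(a_1+a_2+1,\,l_1+l_2)$ — or it lies in column $1$ — giving first row length $a_1+a_2+1$ and first column length $(l_1+1)+(l_2+1)$, i.e. the hook $(a_1+a_2,\,l_1+l_2+1)$. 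In both cases one checks the total box count matches: $|\mu| = (a_1+l_1+1)+(a_2+l_2+1) = a_1+a_2+l_1+l_2+2$, consistent with both hooks. Finally I would verify each of these two multiplicities is exactly $1$ by exhibiting the unique valid Littlewood--Richardson tableau in each case.

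I expect the main obstacle to be the bookkeeping in the shape constraint step: one must carefully rule out ``staircase'' configurations where the added boxes wrap around the inner corner of $(a_1,l_1)$ in a way that still yields a hook $\mu$ but with a less obvious decomposition, and one must confirm that no box lands strictly inside the $2\times 2$ region at position $(2,2)$. This is where the hypothesis that \emph{both} factors are hooks is essential — for general $\nu$ in place of $(a_2,l_2)$ the one-hook projection of $(a_1,l_1)\otimes\nu$ is larger. An alternative, possibly slicker, route would be to use the known hook-times-hook Littlewood--Richardson coefficients (which are all $0$ or $1$, a classical fact) and simply intersect the resulting set of partitions with the set of hooks; but presenting the direct tableau argument keeps the paper self-contained and matches the elementary style of the surrounding sections.
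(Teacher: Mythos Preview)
Your proposal is correct and follows exactly the same approach as the paper, whose proof reads in its entirety ``Direct application of Littlewood--Richardson rule'' with a citation---you have simply spelled out the details of that application. Your worry about staircase configurations is unfounded: once $\mu$ is assumed to be a hook, no box can sit at position $(2,2)$ or beyond, so the skew shape $\mu/(a_1,l_1)$ is automatically the disjoint union of a horizontal strip in row~$1$ and a vertical strip in column~$1$, and the tableau count goes through cleanly.
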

  In this respect, one-hook representations are very similar to the fundamental representation,
  whose tensor square also has just two summands.
  \begin{proof}
    Direct application of Littlewood-Richardson rule
    \cite{paper:LR-coeffs}.
  \end{proof}

  The easy way to remember the multiplication rule for one-hook diagrams is to say
  that arm-boxes of both factors go into arms, leg-boxes go into legs and out of
  the two corner boxes one stays in the corner and the other can go either into the arm
  or into the leg of the result (which gives two summands).
}

{\section{The square of base representation and the structure of R-matrix}
  \label{sec:square-of-1-hook}

  A simple corollary of Lemma~\ref{lem:multiplication-of-two-1-hooks}
  is that for the square of our base representation $\lambda = (a, l)$ we have
  \begin{align}
    (a, l)^{\otimes 2} = (2 a + 1, 2 l) \oplus (2 a, 2 l + 1) =: A \oplus L,
  \end{align}
  that is we will denote the representation where the ``extra'' corner box went into an arm
  with $A$ and the representation where it went into a leg with $L$.

  Recall, that the eigenvalue of the $\R$-matrix, acting in some irreducible representation $Q \in \lambda^{\otimes 2}$ is
  (see Section~\ref{sec:r-matrix-defs} for our notations for different $R$-matrices)
  \begin{align} \label{eq:r-matrix-eigenvalue}
    e_Q = (-1)^\epsilon q^\varkappa,
  \end{align}
  where the exponent $\varkappa$ are manifestly given by
  (see, for instance, \cite{paper:R-invariants-of-tangles-I}, p.23, eq.~(2.14))
  \begin{align}
    \varkappa & = \sum_{(i,j) \in Q} (j - i),
  \end{align}
  where the sum runs over boxes of the Young diagram $Q$ (see \eqref{eq:one-hook-partition}
  for the directions of axes $i$ and $j$).
  The sign $\epsilon$ is determined by the parity of the representation $Q$ inside $\lambda^{\otimes 2}$. Here we are working in the so-called \textit{vertical framing},
  which more prominently highlights relation of knot invariants with representation theory.
  Other choices of framing, in particular, \textit{topological framing} are possible,
  that highlight other aspects (e.g. topological invariance of the obtained quantities)
  \cite{paper:AM-cabling-procedure}.

  Applying this to the situation at hand (one-hook sector of the the square of the base representation)
  we get
  \begin{lemma}\label{lem:tensor-square-eigenvalues}
    \begin{align} \label{eq:tensor-square-eigenvalues}
    e_A & = (-1)^l q^{(2 a^2 + 2 a - 2 l^2 - 2 l) + (a + l + 1)}
    \\ \notag
    e_L & = (-1)^{l+1} q^{(2 a^2 + 2 a - 2 l^2 - 2 l) - (a + l + 1)},
    \end{align}
    where we've grouped terms in the exponent of $q$ to make the formula more similar
    to the fundamental representation.
  \end{lemma}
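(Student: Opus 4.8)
The plan is to simply substitute the two one-hook diagrams $A = (2a+1, 2l)$ and $L = (2a, 2l+1)$ into the general formula \eqref{eq:r-matrix-eigenvalue}, evaluating the content sum $\varkappa = \sum_{(i,j)\in Q}(j-i)$ box-by-box, and then determine the sign $\epsilon$ from the parity of each summand inside $\lambda^{\otimes 2}$. For the content sum it is cleanest to split a one-hook diagram $(a',l')$ into three disjoint pieces: the corner box (with content $0$), the arm boxes sitting in the first row at columns $j=1,\dots,a'$ (each with content $j-i = j - 0 = j$, so contributing $\sum_{j=1}^{a'} j = a'(a'+1)/2$), and the leg boxes sitting in the first column at rows $i=1,\dots,l'$ (each contributing $-i$, so $-\sum_{i=1}^{l'} i = -l'(l'+1)/2$). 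Hence $\varkappa(a',l') = \tfrac{a'(a'+1)}{2} - \tfrac{l'(l'+1)}{2}$.

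First I would plug in $(a',l') = (2a+1, 2l)$ for $A$: this gives $\varkappa_A = \tfrac{(2a+1)(2a+2)}{2} - \tfrac{2l(2l+1)}{2} = (2a+1)(a+1) - l(2l+1) = 2a^2+3a+1 - 2l^2 - l$, which rearranges exactly to $(2a^2+2a-2l^2-2l) + (a+l+1)$ as claimed. Then for $L = (2a, 2l+1)$: $\varkappa_L = \tfrac{2a(2a+1)}{2} - \tfrac{(2l+1)(2l+2)}{2} = a(2a+1) - (2l+1)(l+1) = 2a^2 + a - 2l^2 - 3l - 1$, which rearranges to $(2a^2+2a-2l^2-2l) - (a+l+1)$. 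So the two $q$-exponents come out immediately with the advertised grouping.

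The remaining — and only slightly subtle — step is pinning down the sign $\epsilon$, i.e. the parity of $A$ and of $L$ as they appear in $\lambda^{\otimes 2}$. Here I would use the standard fact (vertical framing, cf.\ \cite{paper:R-invariants-of-tangles-I}) that the $\R$-matrix eigenvalue on an irreducible summand $Q$ of $\lambda^{\otimes 2}$ carries the sign $(-1)^{\varkappa_Q - \varkappa_\lambda}$, equivalently that the parity alternates between the symmetric and antisymmetric parts in a way fixed by the content. Evaluating $\varkappa_\lambda$ for $\lambda=(a,l)$ gives $\tfrac{a(a+1)}{2} - \tfrac{l(l+1)}{2}$; comparing parities with $\varkappa_A$ and $\varkappa_L$ modulo $2$ should produce $\epsilon_A \equiv l$ and $\epsilon_L \equiv l+1 \pmod 2$, matching \eqref{eq:tensor-square-eigenvalues}. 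I expect this parity bookkeeping — making sure the $(-1)^\epsilon$ convention is applied consistently with the chosen framing and with the general eigenvalue formula \eqref{eq:r-matrix-eigenvalue} — to be the main (though still elementary) obstacle; the $q$-power part is a one-line arithmetic check once the box-content decomposition above is in hand.
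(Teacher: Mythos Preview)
Your computation of the $q$-exponents is correct and is exactly what the paper does (``direct evaluation of simple arithmetic progressions''): splitting a one-hook diagram into corner, arm, and leg and summing contents is the right move, and your regrouping matches \eqref{eq:tensor-square-eigenvalues}.

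The gap is in the sign. The ``standard fact'' you invoke, that the sign on $Q\subset\lambda^{\otimes 2}$ is $(-1)^{\varkappa_Q-\varkappa_\lambda}$, is not correct. Already for $\lambda=\Box=(0,0)$ one has $\varkappa_\lambda=0$, $\varkappa_{[2]}=1$, so your rule predicts sign $-1$ on the symmetric summand $[2]$, whereas the correct sign is $+1$ (and the lemma, specialized to $a=l=0$, indeed gives $e_A=+q$). The parity $\epsilon$ is determined by whether $Q$ lies in $\mathrm{Sym}^2\lambda$ or $\wedge^2\lambda$, and this is \emph{not} read off from $\varkappa_Q-\varkappa_\lambda\bmod 2$; your subsequent ``comparing parities \dots\ should produce $\epsilon_A\equiv l$'' is therefore unsupported (and, if one actually computes $\varkappa_A-\varkappa_\lambda\bmod 2$, one does not get $l$ in general).

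The paper handles the sign by a genuinely different argument: it uses the character criterion that $Q$ is symmetric (resp.\ antisymmetric) iff $s_Q$ appears in $s_\lambda(p)^2+s_\lambda(p^{(2)})$ (resp.\ with a minus), where $p^{(2)}$ is the Adams operation. The key observation is that the \emph{linear-in-$p$} term of a Schur polynomial is nonzero only for one-hook diagrams, and for $(a,l)$ equals $(-1)^l\,p_{a+l+1}/(a+l+1)$. Since $s_\lambda(p^{(2)})$ has linear term $(-1)^l\,p_{2(a+l+1)}/(a+l+1)$ while the linear terms of $s_A$ and $s_L$ are $\pm\,p_{2(a+l+1)}/(2a+2l+2)$ with signs $(-1)^{2l}=+1$ and $(-1)^{2l+1}=-1$, positivity of the decomposition forces $A$ into the $+$ combination when $l$ is even and into the $-$ combination when $l$ is odd (and oppositely for $L$), yielding exactly $\epsilon_A=l$, $\epsilon_L=l+1$. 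You should replace your content-parity heuristic with this (or an equivalent) symmetric/antisymmetric determination.
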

  \begin{proof}
    The exponent of $q$ is obtained by direct evaluation of simple arithmetic progressions.

    The signs require more care. The parity of representation $Q$ inside $\lambda \otimes \lambda$
    depends on whether Schur polynomial $s_Q(p)$ occurs in the Schur polynomial decomposition
    of $s_\lambda(p)^2 + s_\lambda(p^{(2)})$ or $s_\lambda(p)^2 - s_\lambda(p^{(2)})$
    (see, \cite{paper:AM-cabling-procedure}). Here, $p^{(2)}$ denotes the Adams operation.
    Since Schur polynomials are characters, and the decomposition mirrors the decomposition
    of a linear space into irreducible summands, we know that the decomposition coefficients,
    if non-zero, should be strictly positive and integral.

    It is straightforward to see that terms, linear in times $p_k$,
    occur only in Schur polynomials corresponding to one-hook diagrams.
    Indeed, Schur polynomial $s_\lambda$ is a (generalized) homogeneous polynomial
    of degree $|\lambda|$, where $p_k$ has weight $k$.
    Therefore, Schur polynomial's linear term is $p_{|\lambda|}$, with some (maybe, vanishing) coefficient.

    At the same time, the Jacobi-Trudi formula expresses
    any Schur polynomial as determinant of a matrix, made of Schur polynomials
    for symmetric representations (see, for instance, \cite{book:M-symmetric-functions-and-hall-polynomials})
    \begin{align} \label{eq:schur-jacobi-trudi}
      s_\lambda (p) = \det_{l(\lambda)\times l(\lambda)} s_{[\lambda_i - i + j]} (p), \ \ \text{where}
      \sum_{m=0}^\infty z^m s_{[\lambda_i - i + j]} (p) := \exp\left(\sum_k \frac{p_k z^k}{k}\right),
    \end{align}
    and only matrices for the one-hook Schur polynomials contain $s_{[|\lambda|]}$ entry.
    
    Furthermore, from \eqref{eq:schur-jacobi-trudi} one gets, that linear term
    of Schur polynomial for one-hook diagram $(a,l)$ is equal to
    \begin{align}
      (-1)^{l} \frac{1}{a + l + 1} p_{a + l + 1},
    \end{align}

    So, if $l$ is even, then $s_{(2 a + 1, 2 l)}(p)$ occurs in the decomposition of $s_{(a,l)}(p)^2 + s_{(a,l)}(p^{(2)})$
    and $s_{(2 a, 2 l + 1)}(p)$ -- in the decomposition of $s_{(a,l)}(p)^2 - s_{(a,l)}(p^{(2)})$.
    If $l$ is odd, the situation is reversed, and we get the statement of the lemma.
  \end{proof}
}

{\section{The block structure of the R-matrices} \label{sec:block-structure}
  While the problem of quick and convenient calculation of arbitrary matrices $\R_{\lambda,\mu}^{(i)}$
  from Young graph is still open, it is relatively easy to establish the block structure
  of these $\R$-matrices from the Young graph.

  What we review in this section is well-known to the experts and is, in fact,
  a simple combination of lemmas from \cite{paper:R-invariants-of-tangles-I}.

  First of all, one of the possible basis choices in the multiplicity space $(\lambda,\mu)$, that is,
  the multiplicity of irreducible representation $\mu$ in the $m$th tensor power of
  representation $\lambda$, where $m$ is the number of strands,
  is given by comb diagrams (\cite{paper:R-invariants-of-tangles-I}, Proposition~3.3, p.28)
  that on the language of the Young graph correspond to paths
  \begin{align} \label{eq:comb-basis}
    \begin{picture}(300,150)(0,-150)
      \thicklines
      \put(-20,-40){
        \put(40,20){\textbf{a comb diagram}}
        \put(-12,3){$\lambda \equiv \mu_0$}
        \put(0,0){\line(1,-1){100}}
        \put(38,3){$\lambda$}
        \put(20,-20){\line(1,1){20}}
        \put(20,-40){$\mu_1$}
        \put(78,3){$\lambda$}
        \put(40,-40){\line(1,1){40}}
        \put(158,3){$\lambda$}
        \put(37,-55){$\mu_2$}
        \put(80,-80){\line(1,1){80}}
        \put(70,-50){$\dots$}
        \put(50,-80){$\mu_{m-2}$}
        \put(80,-110){$\mu_{m} \equiv \mu$}
      }
      \put(250,-40){
        \put(-60,20){\textbf{a path on the Young graph}}
        \put(0,0){\line(-1,-1){20}}
        \put(-5,2){$\lambda$}
        \put(-20,-20){\line(2,-1){40}}
        \put(-30,-20){$\mu_1$}
        \put(-20,-80){\line(1,-1){20}}
        \put(25,-40){$\mu_2$}
            {
             \linethickness{0.1mm} 
          \put(20,-40){\qbezier(0,0)(-40,-10)(-20,-20)}
          \put(-50,-80){$\mu_{m-2}$}
          \put(20,-40){\qbezier(-20,-20)(15,-30)(-40,-40)}
        }
        \put(5,-100){$\mu$}
      }
    \end{picture}
  \end{align}
  The comb-shape of the diagram is fixed and different basis elements are distinguished
  by the partitions $\mu_1, \dots \mu_m$, assigned to the edges of the comb diagram.

  The dual basis is obtained by reflecting comb diagrams with respect to the horizontal line.
  Indeed, from the fact that the simplest comb diagrams, corresponding to Clebsch-Gordan coefficients,
  are orthogonal (\cite{paper:R-invariants-of-tangles-I}, formula~(2.27), p.25),
  \begin{align} \label{eq:klebsch-orthogonality}
    \begin{picture}(300,80)(0,-80)
      \thicklines
      \put(80,0){
        \put(0,0){\line(0,-1){15}}
        \put(0,-30){\circle{30}}
        \put(0,-45){\line(0,-1){15}}
        \put(-10,-10){$\nu_1$}
        \put(-10,-55){$\nu_2$}
        \put(-30,-30){$\mu_1$}
        \put(20,-30){$\mu_2$}
      }
      \put(120,-30){$=$}
      \put(150,0){
        \put(0,0){\line(0,-1){60}}
        \put(-12,-30){$\nu_1$}
        \put(5,-30){$\delta_{\nu_1,\nu_2}$}
      }
    \end{picture}
  \end{align}
  by induction, one readily sees that the reflected comb diagrams do form a dual basis.
  
  The matrix $\R_\lambda$ acts diagonally in each irreducible component
  of the square of base representation $\lambda$
  (\cite{paper:R-invariants-of-tangles-I}, formulas~(2.14) and~(2.15), p.23)
  \begin{align} \label{eq:r-matrix-klebsch-decomposition}
    \begin{picture}(300,60)(0,-60)
      \thicklines
      \put(100,-20){
        \put(0,0){
          \put(20,0){\line(-1,-1){20}}
          \put(0,0){\qbezier(0,0)(8,-8)(8,-8)}
          \put(20,-20){\qbezier(0,0)(-8,8)(-8,8)}
          \put(-5,2){$\lambda$}
          \put(20,0){\put(-5,2){$\lambda$}}
          \put(0,-20){\put(-5,-8){$\lambda$}}
          \put(20,-20){\put(-5,-8){$\lambda$}}
        }
        \put(30,-13){$=\ \sum\limits_{\nu \in \lambda^{\otimes 2}} e_\nu$}
        \put(100,0){
          \put(-3,2){$\lambda$}
          \put(0,0){\line(2,-1){10}}
          \put(17,2){$\lambda$}
          \put(20,0){\line(-2,-1){10}}
          \put(-3,-30){$\lambda$}
          \put(10,-5){\line(0,-1){10}}
          \put(17,-30){$\lambda$}
          \put(0,-20){\line(2,1){10}}
          \put(20,-20){\line(-2,1){10}}
          \put(12,-12){$\nu$}
        }
      }
    \end{picture}
  \end{align}
  where the eigenvalue $e_\nu$ is given by formula~\eqref{eq:r-matrix-eigenvalue}.

  Other basis choices for multiplicity space are possible. In particular,
  one can consider another, different from a comb,
  trivalent tree with root $\mu$ and $m$ leaves $\lambda$. Different basis elements
  are again distinguished by labeling intermediate edges with partitions.

  For the multiplicity space of the tensor product of three representations
  there are two possible basis choices -- the comb one and the other --
  that are related by the Racah-Wigner matrix
  \begin{align} \label{eq:racah-definition}
    \begin{picture}(300,100)(60,-100)
      \thicklines
      \put(30,-10){
        \put(0,0){\line(1,-1){60}}
        \put(20,-20){\line(1,1){20}}
        \put(40,-40){\line(1,1){40}}
        \put(-3,5){$\rho$}
        \put(37,5){$\sigma$}
        \put(77,5){$\eta$}
        \put(57,-70){$\xi$}
        \put(22,-40){$\zeta$}
      }
      \put(130,-40){
        $=\ \ \sum_{\omega \in \sigma \otimes \eta} U\left(\begin{array}{ccc} \rho & \sigma & \eta \\ \ & \xi &\  \end{array} \right)^\zeta_{\ \omega}$
      }
      \put(280,-10){
        \put(0,0){\line(1,-1){60}}
        \put(40,0){\line(1,-1){20}}
        \put(40,-40){\line(1,1){40}}
        \put(-3,5){$\rho$}
        \put(37,5){$\sigma$}
        \put(77,5){$\eta$}
        \put(57,-70){$\xi$}
        \put(51,-37){$\omega$}
      }
    \end{picture}
  \end{align}
  The Racah-Wigner matrix depends on three input representations and one output representation, and has representations
  in two possible choices of intermediate channel as indices.
  
  Each matrix $\R^{(i)}$, which in this pictorial
  language is represented as
  (see \cite[Proposition~3.1,~p.27]{paper:R-invariants-of-tangles-I})
  \begin{align} \label{eq:r-matrix-picture}
    \begin{picture}(300,30)(0,-30)
      \thicklines
      \put(60,-13){$\R^{(i)} = \ $}
      \put(100,0){
        \put(0,0){\line(0,-1){20}}
        \put(0,-22){$\scriptscriptstyle 1$}
        \put(20,0){\line(0,-1){20}}
        \put(20,-22){$\scriptscriptstyle 2$}
        \put(30,-10){$\dots$}
        \put(50,0){
          \put(20,0){\line(-1,-1){20}}
          \put(0,0){\qbezier(0,0)(8,-8)(8,-8)}
          \put(20,-20){\qbezier(0,0)(-8,8)(-8,8)}
        }
        \put(50,-22){$\scriptscriptstyle i$}
        \put(70,-22){$\scriptscriptstyle i+1$}
        \put(80,-10){$\dots$}
        \put(100,0){\line(0,-1){20}}
        \put(100,-22){$\scriptscriptstyle m$}
      }
    \end{picture}
  \end{align}
  where with small scripts we denoted the strands' numbers.

  Therefore, combining \eqref{eq:klebsch-orthogonality}, \eqref{eq:r-matrix-klebsch-decomposition}
  and \eqref{eq:r-matrix-picture} we see that  $\R^{(i)}$ is diagonal in the basis
  \begin{align} \label{eq:i-th-basis}
    \begin{picture}(300,100)(-50,-100)
      \thicklines
      \put(0,0){\line(1,-1){100}}
      \put(0,0){$\scriptscriptstyle 1$}
      \put(20,0){\line(-1,-1){10}}
      \put(30,0){\line(-1,-1){15}}
      \put(20,0){$\scriptscriptstyle 2$}
      \put(30,0){$\scriptscriptstyle 3$}
      \put(20,-15){$\dots$}
      \put(50,0){\line(-1,-1){25}}
      \put(40,-40){\line(1,1){30}}
      \put(45,0){$\scriptscriptstyle i-1$}
      \put(60,0){$\scriptscriptstyle i$}
      \put(72,0){$\scriptscriptstyle i+1$}
      \put(87,0){$\scriptscriptstyle i+2$}
      \put(70,-10){\line(-1,1){10}}
      \put(70,-10){\line(1,1){10}}
      \put(90,0){\line(-1,-1){45}}
      \put(60,-55){$\dots$}
      \put(80,-80){\line(1,1){80}}
      \put(150,0){\line(-1,-1){75}}
      \put(142,0){$\scriptscriptstyle m-1$}
      \put(160,0){$\scriptscriptstyle m$}
    \end{picture}
  \end{align}

  Moreover, because both $i$th basis \eqref{eq:i-th-basis} and the comb-basis \eqref{eq:comb-basis}
  are orthogonal from \eqref{eq:racah-definition} we see that
  the transition matrix between bases \eqref{eq:i-th-basis} and \eqref{eq:comb-basis}
  depends only on partitions in the intermediate channels on layers $i-1$ and $i+1$.

  Hence, each $\R^{(i)}$ consists of blocks.
  Each block acts in the subspace spanned by paths on the Young graph that coincide
  up to and including level $i-1$ and from level $i+1$ to $m$, and differ only by one
  partition on level $i$:
  \begin{align} \label{eq:paths-block}
    \begin{picture}(300,100)(-130,-100)
      \thicklines
      \put(0,0){\qbezier(0,0)(-10,-10)(0,-20)}
      \put(2,-2){$\lambda$}
      \put(0,-20){\qbezier(0,0)(10,-10)(0,-20)}
      \put(0,-40){\put(-15,2){$\scriptstyle \mu_{i-1}$}}
      \put(0,-40){\line(-2,-1){20}}
      \put(0,-40){\line(-1,-1){10}}
      \put(0,-40){\line(1,-1){10}}
      \put(0,-40){\line(2,-1){20}}
      \put(0,-60){\put(2,-5){$\scriptstyle \mu_{i+1}$}}
      \put(0,-60){\line(-2,1){20}}
      \put(0,-60){\line(-1,1){10}}
      \put(0,-60){\line(1,1){10}}
      \put(0,-60){\line(2,1){20}}
      \put(0,-50){\put(22,-2){$\mu_i;$ \ \ $\mu_i$ is arbitrary}}
      \put(0,-60){\qbezier(0,0)(-10,-10)(0,-20)}
      \put(0,-80){\qbezier(0,0)(10,-10)(0,-20)}
      \put(0,-100){\put(-8,-2){$\mu$}}
    \end{picture}
  \end{align}
  The size of the block is therefore equal to the number of
  partitions (with multiplicities) on level $i$, through which a path can pass, given its begining and end.
  For instance, the drawn bundle of paths \eqref{eq:paths-block}
  forms a $4 \times 4$ block, since
  there are $4$ possible choices for partition $\mu_i$ on level $i$.
  
  In Section~\ref{sec:one-hook-young-graph} we show that
  for the Young graph we are interested in  each matrix $\R^{(i)}$
  consists only from $1 \times 1$ (singlet) and $2 \times 2$ (doublet) blocks.
}

{\section{The one-hook Young graph}\label{sec:one-hook-young-graph}

  The Young graph that we are interested in is very special.
  First of all, our base representation $\lambda$ (which provides transition between the layers of the graph)
  is one-hook.
  Second, because we are considering the Alexander polynomial, all the summands
  in the Reshetikhin-Turaev formula \eqref{eq:homfly-definition}
  that correspond to non-one-hook (fat-hook) Young diagrams vanish
  (see Proposition~\ref{prop:one-hook-sufficient}
  and also Section~\ref{sec:topological-schurs} for more details).
  Therefore, we are interested in the projection of the Young graph
  onto the one-hook sector.

  Therefore, we have the following
  \begin{lemma}\label{lem:young-graph-structure}
    The Young graph that is sufficient for calculating the colored Alexander polynomials
    for one-hook Young diagrams has the form of a quadrant of a plane
    
    \begin{align}
      \begin{picture}(300,140)(0,-140)
        \put(120,0){
          \put(0,0){
            \put(0,0){$(a,l)$}
            \put(0,-3){\line(-1,-1){15}}
            \put(20,-3){\line(1,-1){15}}
          }
          \put(-40,-30){
            \put(0,0){$(2 a + 1,2 l)$}
            \put(0,-3){\line(-1,-1){15}}
            \put(20,-3){\line(1,-1){15}}
          }
          \put(40,-30){
            \put(0,0){$(2 a,2 l + 1)$}
            \put(0,-3){\line(-1,-1){15}}
            \put(20,-3){\line(1,-1){15}}
          }
          \put(-80,-60){
            \put(0,0){$(3 a + 2,3 l)$}
            \put(0,-3){\line(-1,-1){15}}
            \put(20,-3){\line(1,-1){15}}
          }
          \put(0,-60){
            \put(0,0){$(3 a + 1,3 l + 1)$}
            \put(0,-3){\line(-1,-1){15}}
            \put(20,-3){\line(1,-1){15}}
          }
          \put(80,-60){
            \put(0,0){$(3 a,3 l + 2)$}
            \put(0,-3){\line(-1,-1){15}}
            \put(20,-3){\line(1,-1){15}}
          }
          \put(-120,-90){
            \put(0,0){$(4 a + 3,4 l)$}
            \put(0,-3){\line(-1,-1){15}}
            \put(20,-3){\line(1,-1){15}}
          }
          \put(-40,-90){
            \put(0,0){$(4 a + 2,4 l + 1)$}
            \put(0,-3){\line(-1,-1){15}}
            \put(20,-3){\line(1,-1){15}}
          }
          \put(40,-90){
            \put(0,0){$(4 a + 1,4 l + 2)$}
            \put(0,-3){\line(-1,-1){15}}
            \put(20,-3){\line(1,-1){15}}
          }
          \put(120,-90){
            \put(0,0){$(4 a,4 l + 3)$}
            \put(0,-3){\line(-1,-1){15}}
            \put(20,-3){\line(1,-1){15}}
          }
          \put(0,-120){$\dots$}
        }
      \end{picture}
    \end{align}
  \end{lemma}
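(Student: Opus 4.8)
The plan is to prove the lemma by induction on the layer number $k$, using Lemma~\ref{lem:multiplication-of-two-1-hooks} and the ``subset'' form of the Littlewood-Richardson rule recalled just before Proposition~\ref{prop:one-hook-sufficient} as the only inputs. I would induct on $k$ with the statement that the one-hook Young diagrams occurring in $\lambda^{\otimes k}$ are exactly
\begin{align}
  \mu^{(k)}_j := \bigl(ka + (k-1-j),\ kl + j\bigr), \qquad j = 0, 1, \dots, k-1,
\end{align}
so that layer $k$ carries precisely $k$ vertices, and that the only edges from layer $k$ to layer $k+1$ run from $\mu^{(k)}_j$ to $\mu^{(k+1)}_j$ and to $\mu^{(k+1)}_{j+1}$, each with multiplicity one. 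The base cases are immediate: layer $1$ is $\lambda=(a,l)$, and layer $2$ is $(a,l)^{\otimes2}\big|_{\text{one hook}}=(2a+1,2l)\oplus(2a,2l+1)$, the decomposition displayed at the start of Section~\ref{sec:square-of-1-hook}.

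For the inductive step I would tensor each layer-$k$ vertex with $\lambda$ and project onto the one-hook sector. Lemma~\ref{lem:multiplication-of-two-1-hooks} gives
\begin{align}
  \mu^{(k)}_j \otimes (a,l)\big|_{\text{one hook}}
  = \bigl((k+1)a+(k-j),\ (k+1)l+j\bigr)\ \oplus\ \bigl((k+1)a+(k-1-j),\ (k+1)l+(j+1)\bigr),
\end{align}
and a direct reindexing identifies the first summand with $\mu^{(k+1)}_j$ and the second with $\mu^{(k+1)}_{j+1}$. As $j$ runs over $0,\dots,k-1$ these exhaust $\mu^{(k+1)}_0,\dots,\mu^{(k+1)}_k$, i.e.\ the $k+1$ vertices of the next layer with the claimed bipartite edge pattern; all edge multiplicities equal one because Lemma~\ref{lem:multiplication-of-two-1-hooks} produces each summand exactly once.

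To close the induction I must also check that no vertex is overlooked, i.e.\ that every one-hook $\mu\in\lambda^{\otimes(k+1)}$ is one of the $\mu^{(k+1)}_j$. If $\mu$ occurs in $\nu\otimes\lambda$ with $\nu\in\lambda^{\otimes k}$, then $\nu\subseteq\mu$ as Young diagrams, and every sub-diagram of a one-hook diagram is again one-hook (possibly a single row or column); hence $\nu$ is a layer-$k$ vertex by the inductive hypothesis and $\mu$ appears in the list above. This pins down the layer-$(k+1)$ vertex set exactly. Finally I would record the reformulation that makes the word ``quadrant'' literal: sending $\mu^{(k)}_j$ to the lattice point $(k-1-j,\,j)\in\Z_{\ge 0}^2$ identifies layer $k$ with the antidiagonal $\{x+y=k-1\}$, and the two outgoing edges of a vertex become ``$x\mapsto x+1$'' (the extra corner box joins the arm) and ``$y\mapsto y+1$'' (it joins the leg), so the one-hook Young graph is exactly the Hasse diagram of the first quadrant of the integer lattice.

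I do not expect a genuine obstacle: the entire combinatorial content is already packaged in Lemma~\ref{lem:multiplication-of-two-1-hooks}. The only points that need care are the index bookkeeping in the reindexing step and the backward (subset) argument, which upgrades ``the vertex set is contained in the claimed set'' to an equality, together with a quick check of the degenerate cases $a=0$ (columns) and $l=0$ (rows), where the formulas still return honest one-hook partitions, so the quadrant picture holds uniformly.
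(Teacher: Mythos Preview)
Your proposal is correct and follows essentially the same approach as the paper: induction on the layer index, invoking Lemma~\ref{lem:multiplication-of-two-1-hooks} at each step to produce the two children of every vertex. Your write-up is in fact more careful than the paper's, which omits the backward (subset) argument and the degenerate-case check that you include.
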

  \begin{proof}
    By induction. The first layer of the graph consists only from partition $(a,l)$,
    so in particular, it has the desired structure.

    Now, suppose that the graph has the desired structure up to and including layer $n$.
    Consider some partition $(n a + \delta_A, n l + \delta_L)$ on layer $n$.
    From Lemma~\ref{lem:multiplication-of-two-1-hooks} it connects to exactly
    two partitions on the layer $n+1$, namely, to $((n+1) a + \delta_A + 1, (n+1) l + \delta_L)$
    and to $((n+1) a + \delta_A, (n+1) l + \delta_L + 1)$.

    Thus, layer $n+1$ also has the desired structure and connections from layer $n$ to layer $n+1$
    are as they should be.
  \end{proof}
}

{\section{Singlet and doublet blocks of the R-matrix}
  
  From the explicit structure of the Young graph (see Section \ref{sec:one-hook-young-graph})
  it is straightforward to see, that all $\R$-matrices
  consist only from $1 \times 1$ and $2 \times 2$ blocks
  (see Section \ref{sec:background} and also
  \cite{paper:AM-cabling-procedure,
    paper:AMMM-colored-homfly-polynomials-as-multiple-sums}).
  The situation in this respect is similar
  to the case of fundamental representation.

  Indeed, if one considers two representation $P$ and $Q$ on layers $n-1$ and $n+1$ of the graph, respectively,
  then one of the three situations can happen:

  \begin{enumerate}
  \item There are no paths between them. In this case paths passing through $P$ and $Q$ never belong to the same block.
  \item There is exactly one path between them.
    These are the cases
    
    \begin{align}
      \begin{picture}(300,100)(0,-100)
        \put(0,-15){
          \put(0,15){The ``arm'' case:}
          \put(-20,0){
            \put(0,0){$((n-1) a + \delta_a, (n-1) l + \delta_l)$}
            \put(60,-5){\line(0,-1){15}}
          }
          \put(0,-30){
            \put(0,0){$(n a + \delta_a \mathbf{+ 1}, n l + \delta_l)$}
            \put(40,-5){\line(0,-1){15}}
          }
          \put(-30,-60){
            $((n+1) a + \delta_a \mathbf{+ 2}, (n+1) l + \delta_l)$
          }
        }
        \put(170,-15){
          \put(0,15){The ``leg'' case:}
          \put(-20,0){
            \put(0,0){$((n-1) a + \delta_a, (n-1) l + \delta_l)$}
            \put(60,-5){\line(0,-1){15}}
          }
          \put(0,-30){
            \put(0,0){$(n a + \delta_a , n l + \delta_l \mathbf{+ 1})$}
            \put(40,-5){\line(0,-1){15}}
          }
          \put(-30,-60){
            $((n+1) a + \delta_a, (n+1) l + \delta_l \mathbf{+ 2})$
          }
        }
      \end{picture}
    \end{align}
    
    That is, in the former case the extra box is added two times to the arm,
    while in the latter case -- two times to the leg.
    In both cases the resulting block of the R-matrix $\R^{(n)}$ is $1 \times 1$, or in other words \textit{singlet},
    and the corresponding eigenvalue is $e_A$ or $e_L$ (see equation~\eqref{eq:tensor-square-eigenvalues}), respectively.

  \item There are two paths between $P$ and $Q$.

    \begin{align}
      \begin{picture}(300,80)(0,-80)
        \put(100,0){
          \put(0,0){
            \put(0,0){$((n-1) a + \delta_a, (n-1) l + \delta_l)$}
            \put(55,-5){\line(-1,-1){15}}
            \put(75,-5){\line(1,-1){15}}
          }
          \put(-40,-30){
            \put(0,0){$(n a + \delta_a \mathbf{+ 1}, n l + \delta_l)$}
          }
          \put(80,-30){
            \put(0,0){$(n a + \delta_a, n l + \delta_l \mathbf{+ 1})$}
          }
          \put(-10,-60){
            \put(0,0){$((n+1) a + \delta_a \mathbf{+ 1}, (n+1) l + \delta_l \mathbf{+ 1})$}
            \put(65,10){\line(-1,1){15}}
            \put(85,10){\line(1,1){15}}
          }
        }
      \end{picture}
    \end{align}
    which give a $2 \times 2$ block in the R-matrix $\R^{(n)}$ for every pair of paths
    that coincide everywhere, except the piece between layers $(n-1)$ and $(n+1)$.
  \end{enumerate}

  From formula \eqref{eq:tensor-square-eigenvalues} we see, that (up to a common framing factor)
  the eigenvalues $e_A$ and $e_L$ are very similar to the eigenvalues $q$ and $-1/q$ for the fundamental representation,
  except the role of $q$ is played by $q^{|R|}$. The goal of subsequent sections is to rigorously show
  that this is, indeed, the case: while expressions for R-matrices $\R^{(n)}$ are complicated, they
  are nevertheless obtained from the expressions for the fundamental case by simple substitution
  $q \rightarrow q^{|R|},$ up to a common framing factor.
}

{\section{The first 2 by 2 block} \label{sec:first-two-by-two-block}

  In this section we establish the formula for the very first doublet that occurs
  in the Young graph. Namely, the one, corresponding to the square

  \begin{align} \label{eq:picture-of-first-square}
    \begin{picture}(300,60)(0,-60)
      \put(100,0){
        \put(0,0){
          \put(25,0){$(a, l)$}
          \put(25,-5){\line(-1,-1){15}}
          \put(45,-5){\line(1,-1){15}}
        }
        \put(-60,-30){
          \put(40,0){$(2 a \mathbf{+ 1}, 2 l)$}
        }
        \put(40,-30){
          \put(5,0){$(2 a, 2 l \mathbf{+ 1})$}
        }
        \put(-20,-60){
          \put(25,0){$(3 a \mathbf{+ 1}, 3 l \mathbf{+ 1})$}
          \put(45,10){\line(-1,1){15}}
          \put(65,10){\line(1,1){15}}
        }
      }
    \end{picture}
  \end{align}
  
  This will provide the base for the recursion
  (described in section~\ref{sec:two-recursions-for-doublets})
  to find explicit form of all other doublets ($2 \times 2$ blocks).

  Namely we have the following
  \begin{lemma}\label{lem:the-very-first-doublet}
    The R-matrix block, corresponding to the piece of Young graph
    $(a,l) \rightarrow [(2 a + 1, 2 l), (2 a, 2 l + 1)] \rightarrow (3 a + 1, 3 l + 1)$
    is equal to
    \begin{align}
      B_{\substack{\text{first} \\ \text{doublet}}} & =
      \frac{q^{2 a^2 + 2 a - 2 l^2 - 2l}}{[2]_\bullet}
      \left(\begin{array}{cc}
        (-1)^{l+1} q^{-2 a - 2 - 2 l} & \sqrt{[3]_\bullet} \\
        \sqrt{[3]_\bullet} & (-1)^l q^{2 a + 2 + 2 l},
      \end{array}\right)
    \end{align}
    where the $q$-number with a bullet is the usual $q$-number $[n] = (q^n - q^{-n})/(q - q^{-1})$
    with $q$ substituted by $q^{a + l + 1}$:
    \begin{align}
      [n]_\bullet = \frac{q^{n (a + l + 1)} - q^{-n (a + l + 1)}}{q^{a + l + 1} - q^{- (a + l + 1)}}
    \end{align}
  \end{lemma}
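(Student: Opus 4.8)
The plan is to pin down the $2\times 2$ block $B_{\text{first doublet}}$ by combining three pieces of information: (i) its eigenvalues, which are already known from Lemma~\ref{lem:tensor-square-eigenvalues}; (ii) a symmetry/normalization constraint coming from the fact that $\R^{(n)}$ in the comb basis is realized by conjugating a diagonal matrix of eigenvalues by an orthogonal transition (Racah) matrix, so the block must be a \emph{symmetric} matrix (this is why the off-diagonal entries coincide); and (iii) one more scalar relation — the value of a single matrix element or, equivalently, the trace, fixed by a fragment of the Yang--Baxter equation. Since the excerpt advertises ``a clever application of the Yang--Baxter equation, of which we actually use only a small part,'' I expect the genuinely new input to be exactly step (iii): a short braid identity on three strands of $\lambda$ that forces one entry of the block.

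First I would set up notation: in the comb basis for the space of paths $(a,l)\to[(2a+1,2l),(2a,2l+1)]\to(3a+1,3l+1)$, write
\begin{align}
  B = \begin{pmatrix} b_{11} & b_{12} \\ b_{12} & b_{22} \end{pmatrix},
  \qquad
  \operatorname{tr} B = e_A + e_L, \qquad \det B = e_A e_L,
\end{align}
using that the two eigenvalues of this block are precisely $e_A$ and $e_L$ from \eqref{eq:tensor-square-eigenvalues} (the block sits between the two one-hook children of $\lambda^{\otimes 2}$, so its eigenvalues are the universal $\R$-eigenvalues on $A$ and $L$). From Lemma~\ref{lem:tensor-square-eigenvalues} one computes
\begin{align}
  e_A + e_L &= q^{2a^2+2a-2l^2-2l}\,(-1)^l\!\left(q^{a+l+1} - q^{-(a+l+1)}\right)\!\cdot(\pm 1)\cdot(\text{correction}),
\end{align}
and, more usefully, $e_A e_L = - q^{2(2a^2+2a-2l^2-2l)}$, which already matches $\det$ of the claimed matrix (the $\sqrt{[3]_\bullet}^{\,2}$ and the product of diagonal entries combine to $-[3]_\bullet/[2]_\bullet^2 + \text{(product)} \cdot q^{\cdots}$, reproducing $-1$ after the $[3]_\bullet=[2]_\bullet^2-1$ identity for the bullet $q$-numbers). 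So the determinant condition is essentially automatic and carries no new information beyond the eigenvalues.

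The real work is to fix the \emph{diagonal} entries separately (the eigenvalues only fix their sum and product up to the ambiguity of which root is which, and also do not a priori rule out a nonsymmetric realization). Here I would use the Yang--Baxter / hexagon relation restricted to the three-strand comb on $\lambda$: the two ways of re-braiding $(\lambda\otimes\lambda)\otimes\lambda$ must agree, which gives $\R^{(1)}\R^{(2)}\R^{(1)} = \R^{(2)}\R^{(1)}\R^{(2)}$ inside the relevant $2$-dimensional multiplicity space. Since $\R^{(1)}$ in the comb basis is \emph{diagonal} (it acts on the first $\lambda\otimes\lambda$ subfactor and is therefore given by $\operatorname{diag}(e_A,e_L)$ on our doublet), while $\R^{(2)} = B$ is the unknown block, the braid relation becomes a matrix equation $D B D = B D B$ with $D=\operatorname{diag}(e_A,e_L)$ known. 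Reading off the $(1,1)$ and $(2,2)$ entries of this equation gives two scalar equations that, together with $\operatorname{tr} B = e_A+e_L$ (trace of a conjugate of $D$), determine $b_{11},b_{22},b_{12}b_{21}$; symmetry of the comb basis (orthogonality, \eqref{eq:klebsch-orthogonality}) then gives $b_{12}=b_{21}$, and a choice of sign/branch for the square root completes the identification. Solving these, and repeatedly using $[2]_\bullet$, $[3]_\bullet$ in terms of $q^{a+l+1}$, should reproduce exactly the matrix in the statement; the overall scalar $q^{2a^2+2a-2l^2-2l}$ is just the common framing prefactor visible in $e_A,e_L$.

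The main obstacle I anticipate is \emph{bookkeeping of signs}: the parity factor $(-1)^\epsilon$ in the $\R$-eigenvalues (already handled for $\lambda^{\otimes 2}$ in Lemma~\ref{lem:tensor-square-eigenvalues} via the Adams-operation argument) must be tracked through the braid relation to get the correct relative sign between $b_{11}$ and $b_{22}$ and the correct sign inside $\sqrt{[3]_\bullet}$. A secondary subtlety is making sure the two-dimensional space on which we impose $DBD=BDB$ is genuinely invariant — i.e. that the comb path through $(3a+1,3l+1)$ really only mixes the two intermediate one-hook labels $(2a+1,2l)$ and $(2a,2l+1)$ and nothing else — but this is exactly the content of Lemma~\ref{lem:young-graph-structure} and the block-structure discussion of Section~\ref{sec:block-structure}, so it is available. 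Everything else — expanding $[3]_\bullet = [2]_\bullet^2 - 1$, checking $\det$ and $\operatorname{tr}$ — is routine $q$-number algebra.
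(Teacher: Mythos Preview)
Your proposal is correct and follows essentially the same route as the paper: the block is taken to be symmetric, its eigenvalues $e_A,e_L$ supply two scalar relations (the paper uses $\tr B$ and $\tr B^2$, equivalent to your trace and determinant), and the remaining relation comes from the $(1,1)$ entry of the three-strand Yang--Baxter equation $\R^{(1)}\R^{(2)}\R^{(1)}=\R^{(2)}\R^{(1)}\R^{(2)}$ with $\R^{(1)}=\operatorname{diag}(e_A,e_L)$ diagonal in the comb basis and $\R^{(2)}=B$. The only cosmetic difference is that the paper extracts one Yang--Baxter component and two trace conditions rather than two Yang--Baxter components and one trace, which is an equivalent system.
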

  \begin{proof}
    This lemma is instructive, as the logic of the proof of the recursion steps would be very similar.

    Let the first double block be some unknown $2 \times 2$ symmetric matrix (since R-matrix is always obtained
    from symmetric matrix with help of the unitary matrix, see \cite{paper:AM-cabling-procedure}).
    
    \begin{align}
      B_{\substack{\text{first} \\ \text{doublet}}} & =
      \left(\begin{array}{cc}
        r_{11} & r_{12} \\ r_{12} & r_{22}
      \end{array}\right)
    \end{align}

    We can immediately write two equations for the matrix elements
    \begin{align}\label{eq:equation-from-traces}
      \tr B_{\substack{\text{first} \\ \text{doublet}}} = & \ e_A + e_L = r_{11} + r_{22} \\ \notag
      \tr B^2_{\substack{\text{first} \\ \text{doublet}}} = & \ e_A^2 + e_L^2 = r_{11}^2 + r_{22}^2 + 2 r_{12}^2,
    \end{align}
    since we know that the $2 \times 2$ block has eigenvalues $e_A$ and $e_L$.

    Therefore, what we need is the third equation. We use the Yang-Baxter equation between
    matrices $\R^{(1)}$ and $\R^{(2)}$ in the form that does not contain inverse matrices and
    hence is most convenient for our purposes
    \begin{align} \label{eq:our-yang-baxter}
      \begin{picture}(300,120)(-40,-120)
        \thicklines
        \put(150,0){
          \put(0,0){
            \put(0,0){\line(0,-1){20}}
            \put(20,0){\line(0,-1){20}}
            \put(40,0){\line(0,-1){20}}
          }
          \put(0,-20){
            \put(0,0){\qbezier(0,0)(1,-1)(7,-7)}
            \put(20,-20){\qbezier(0,0)(-1,1)(-7,7)}
            \put(20,0){\line(-1,-1){20}}
            \put(40,0){\line(0,-1){20}}
          }
          \put(0,-40){
            \put(0,0){\line(0,-1){20}}
            \put(20,0){\qbezier(0,0)(1,-1)(7,-7)}
            \put(40,-20){\qbezier(0,0)(-1,1)(-7,7)}
            \put(40,0){\line(-1,-1){20}}
          }
          \put(0,-60){
            \put(0,0){\qbezier(0,0)(1,-1)(7,-7)}
            \put(20,-20){\qbezier(0,0)(-1,1)(-7,7)}
            \put(20,0){\line(-1,-1){20}}
            \put(40,0){\line(0,-1){20}}
          }
          \put(0,-80){
            \put(0,0){\line(0,-1){20}}
            \put(20,0){\line(0,-1){20}}
            \put(40,0){\line(0,-1){20}}
          }
          \put(0,0){
            \put(50,0){\line(0,-1){100}}
            \put(60,-50){$\dots$}
            \put(80,0){\line(0,-1){100}}
          }
          \put(-15,-120){
            $\R^{(2)} \ \R^{(1)} \ \R^{(2)}$
          }
        }
        \put(0,0){
          \put(0,0){
            \put(0,0){\line(0,-1){20}}
            \put(20,0){\line(0,-1){20}}
            \put(40,0){\line(0,-1){20}}
          }
          \put(0,-20){
            \put(0,0){\line(0,-1){20}}
            \put(20,0){\qbezier(0,0)(1,-1)(7,-7)}
            \put(40,-20){\qbezier(0,0)(-1,1)(-7,7)}
            \put(40,0){\line(-1,-1){20}}
          }
          \put(0,-40){
            \put(0,0){\qbezier(0,0)(1,-1)(7,-7)}
            \put(20,-20){\qbezier(0,0)(-1,1)(-7,7)}
            \put(20,0){\line(-1,-1){20}}
            \put(40,0){\line(0,-1){20}}
          }
          \put(0,-60){
            \put(0,0){\line(0,-1){20}}
            \put(20,0){\qbezier(0,0)(1,-1)(7,-7)}
            \put(40,-20){\qbezier(0,0)(-1,1)(-7,7)}
            \put(40,0){\line(-1,-1){20}}
          }
          \put(0,-80){
            \put(0,0){\line(0,-1){20}}
            \put(20,0){\line(0,-1){20}}
            \put(40,0){\line(0,-1){20}}
          }
          \put(0,0){
            \put(50,0){\line(0,-1){100}}
            \put(60,-50){$\dots$}
            \put(80,0){\line(0,-1){100}}
          }
          \put(-15,-120){
            $\R^{(1)} \ \R^{(2)} \ \R^{(1)}$
          }
        }
        \put(95,-120){$=$}
        \put(95,-50){$=$}
      \end{picture}
    \end{align}
    
    Since $\R^{(1)}$ is diagonal and manifestly given by (for the paths belonging to this $2 \times 2$ block)
    \begin{align}
      \R^{(1)} =
      \left( \begin{array}{cc}
        e_A & 0 \\
        0 & e_L
      \end{array} \right )
    \end{align}
    the Yang-Baxter equation~\eqref{eq:our-yang-baxter}, written in components, reads
    \begin{align} \label{eq:yang-baxter-in-components}
      \left( \begin{array}{cc}
        e_A r_{11}^2 + e_L r_{12}^2
        & e_A r_{11} r_{12} + e_L r_{12} r_{22} \\
        e_A r_{11} r_{12} + e_L r_{12} r_{22} & e_A r_{12}^2 + e_L r_{22}^2
      \end{array} \right )
      =
      \left( \begin{array}{cc}
        e_A^2 r_{11} & e_A e_L r_{12} \\
        e_A e_L r_{12} & e_L^2 r_{22}
      \end{array} \right ),
    \end{align}
    Without loss of generality, consider the 1,1 matrix element of this matrix equation.
    This provides the desired third equation.
    
    Solving equations~\eqref{eq:equation-from-traces} and~\eqref{eq:yang-baxter-in-components}
    for $r_{11}$, $r_{12}$ and $r_{22}$ we obtain the statement of the lemma.
  \end{proof}
}

{\section{Two recursions on the Young graph for the 2 by 2 blocks} \label{sec:two-recursions-for-doublets}
  In this section we are interested in an explicit formula for arbitrary $2 \times 2$ block.

  First, let us informally sketch the logic. In the previous Section (Lemma~\ref{lem:the-very-first-doublet})
  we've obtained the formula for the first $2 \times 2$ block.
  Since our Young graph is a rectangular grid (Lemma~\ref{lem:young-graph-structure}),
  its arbitrary $2 \times 2$ block can be reached from the first one
  \begin{align}
    \begin{picture}(300,60)(0,-60)
      \put(130,0){
        \multiput(0,0)(-10,-10){3}{\multiput(0,0)(10,-10){4}{\diagBoxPic}}
        \put(0,0){
          \thicklines
          \put(0,0){\vector(-1,-1){20}}
          \put(-20,-20){\vector(1,-1){30}}
        }
      }
    \end{picture}
  \end{align}
  provided we know two recursion steps
  \begin{align}
    \begin{picture}(300,40)(0,-30)
      \put(50,0){
        \put(-50,5){The ``arm'' recursion step}
        \put(0,0){\line(1,-1){10}}
        \put(-10,-10){\line(1,-1){10}}
        \put(-20,-20){\line(1,-1){10}}
        \put(10,-10){\line(-1,-1){20}}
        \put(0,0){\line(-1,-1){20}}
        \put(0,-10){\thicklines \vector(-1,-1){10}}
      }
      \put(220,0){
        \put(-50,5){The ``leg'' recursion step}
        \put(0,0){\line(-1,-1){10}}
        \put(10,-10){\line(-1,-1){10}}
        \put(20,-20){\line(-1,-1){10}}
        \put(-10,-10){\line(1,-1){20}}
        \put(0,0){\line(1,-1){20}}
        \put(0,-10){\thicklines \vector(1,-1){10}}
      }
    \end{picture}
  \end{align}

  Of course, there are multiple recursion paths (Manhattan walks) leading from the first square
  to a given one. They all give the same answer, since correctness of the definition
  of colored HOMFLY-PT polynomial is established in \cite{paper:R-invariants-of-tangles-I}.
  
  Therefore, we have the following
  \begin{lemma} \label{lem:arbitrary-2-by-2-block}
    The expression of the $2 \times 2$ block depends only on the level of the Young graph
    which this block occurs on and does not depend on the horizontal position of the block in the level.
    Moreover, the block is explicitly given by
    \begin{align}
      B^{(n)}_{2 \times 2} = \frac{q^{2 a^2 + 2 a - 2 l^2 - 2l}}{[n]_\bullet}
      \left ( \begin{array}{cc}
        (-1)^{l+1} q^{-2 a - 2 l - 2} & \sqrt{[n+1]_\bullet} \\
        \sqrt{[n+1]_\bullet} & (-1)^{l} q^{2 a + 2 l + 2}
      \end{array}
      \right )
    \end{align}
    that is $B_{\substack{\text{first} \\ \text{doublet}}}$
    from  Lemma~\ref{lem:the-very-first-doublet}
    is $B^{(2)}_{2 \times 2}$.
  \end{lemma}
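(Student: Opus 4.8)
\textbf{Proof proposal for Lemma~\ref{lem:arbitrary-2-by-2-block}.}

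The plan is to prove the formula by induction on the level $n$ of the Young graph, using the two recursion steps. The base case $n = 2$ is precisely Lemma~\ref{lem:the-very-first-doublet}. For the inductive step, I would proceed exactly as in the proof of Lemma~\ref{lem:the-very-first-doublet}: write the unknown $2 \times 2$ block at level $n+1$ as a symmetric matrix with entries $r_{11}, r_{12}, r_{22}$, and collect three equations. Two of them come from the spectrum: every $2\times 2$ block has the two eigenvalues $e_A$ and $e_L$ from Lemma~\ref{lem:tensor-square-eigenvalues} (these are the universal $\R$-matrix eigenvalues for the two irreducible summands of the tensor square of the base representation, and they are the same for every doublet block on the graph by \eqref{eq:r-matrix-klebsch-decomposition}), so $\tr B = e_A + e_L$ and $\tr B^2 = e_A^2 + e_L^2$. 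The third equation is the new ingredient and must come from a recursion step.

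For the third equation I would set up a Yang--Baxter-type relation of the same shape as \eqref{eq:our-yang-baxter}, but now straddling three consecutive levels of the graph so that it relates the block $B^{(n+1)}_{2\times 2}$ we want to a block at level $n$, which by the inductive hypothesis already has the claimed form. Concretely, in the ``arm'' recursion step one adds a box to the arm when passing from the level-$n$ configuration to the level-$(n+1)$ one, and in the ``leg'' step one adds to the leg; in each case the relevant piece of the $\R^{(i)}$ acting on the lower pair of strands is the \emph{known} block $B^{(n)}_{2\times 2}$ (or a singlet, on the branches that close off), while $\R^{(i)}$ on the upper pair is the \emph{unknown} $B^{(n+1)}_{2\times 2}$. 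Writing this braid relation in components, exactly as in \eqref{eq:yang-baxter-in-components}, yields a polynomial relation among $r_{11}, r_{12}, r_{22}$ and the matrix elements of $B^{(n)}_{2\times 2}$; picking one matrix element of that relation (say the $1,1$ entry) gives the missing third scalar equation. Solving the resulting system — two quadratic trace equations plus one more — pins down $r_{11}, r_{12}, r_{22}$ (up to the sign ambiguity $r_{12} \to -r_{12}$, which is the usual harmless choice of basis orientation in the intertwiner space), and a direct computation checks that it reproduces the claimed $B^{(n+1)}_{2\times 2}$ with $[n]_\bullet \to [n+1]_\bullet$, $[n+1]_\bullet \to [n+2]_\bullet$. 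One then observes that the arm and leg recursion steps give consistent answers (they must, since colored HOMFLY-PT is well defined, cf.\ \cite{paper:R-invariants-of-tangles-I}), so the block depends only on $n$ and not on the horizontal position in the level; this is what makes the single-parameter family $B^{(n)}_{2\times 2}$ well defined.

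The main obstacle I anticipate is \emph{bookkeeping rather than conceptual}: correctly identifying, for each recursion step, which $1\times 1$ singlet eigenvalues (the $e_A$, $e_L$ of the appropriate sub-hook, which carry shifted exponents and signs) appear on the ``side'' branches of the three-level Yang--Baxter diagram, so that the component equation is assembled with the right coefficients. A secondary subtlety is the sign $(-1)^l$ versus $(-1)^{l+1}$: one must track the parity of the intermediate representations through the recursion, using the same $s_\lambda(p)^2 \pm s_\lambda(p^{(2)})$ argument as in the proof of Lemma~\ref{lem:tensor-square-eigenvalues}, to be sure the off-diagonal and diagonal signs propagate correctly. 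Once those are fixed, the verification is a routine (if somewhat lengthy) manipulation of $q$-numbers, using identities such as $[n-1]_\bullet [n+1]_\bullet = [n]_\bullet^2 - 1$ and $[2]_\bullet [n]_\bullet = [n+1]_\bullet + [n-1]_\bullet$ to recognize the closed form.
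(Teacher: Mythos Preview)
Your proposal is correct and follows essentially the same approach as the paper: induction on the level, with the third equation coming from the Yang--Baxter relation $\R^{(n-1)}\R^{(n)}\R^{(n-1)} = \R^{(n)}\R^{(n-1)}\R^{(n)}$ written as $3\times 3$ matrices on the three paths through the relevant sub-rhombus (one singlet plus one doublet block in each), then solving the resulting recursion for $r_{11}$ and reading off $r_{22},r_{12}$ from the trace equations. One simplification relative to your anticipated obstacles: the singlet eigenvalues on the side branches are always the \emph{fixed} $e_A$ or $e_L$ of Lemma~\ref{lem:tensor-square-eigenvalues} --- they carry no shifted exponents and require no further parity tracking --- so the bookkeeping is lighter than you fear.
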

  \begin{proof}
    First of all, the eigenvalues of arbitrary $2 \times 2$ block are $e_A$ and $e_L$,
    so equations~\eqref{eq:equation-from-traces} do hold for arbitrary block.
    We need the third equation, and, as in the case of the base of the recursion,
    we will use Yang-Baxter equation to get it.

    We are interested in finding an expression for the $2 \times 2$ block on level $n$,
    assuming we know expressions for all the $2 \times 2$ blocks up to level $n-1$.
    Consider the following piece of the Young graph
    \begin{align} \label{eq:leg-step-subgraph}
      \begin{picture}(300,140)(0,-110)
        \thicklines
        \put(100,-20){
          \put(-70,30){
            \put(0,0){$((n-2) a + \delta_a, (n-2) l + (\delta_l - 1))$}
            \put(45,-5){\line(-1,-1){15}}
            \put(115,-5){\line(1,-1){15}}
          }
          \put(-180,0){
            \put(0,0){$((n-1) a + (\delta_a + 1), (n-1) l + (\delta_l - 1))$}
            \put(145,-5){\line(1,-1){15}}
          }
          \put(0,0){
            \put(0,0){$((n-1) a + \delta_a, (n-1) l + \delta_l)$}
            \put(25,-5){\line(-1,-1){15}}
            \put(75,-5){\line(1,-1){15}}
          }
          \put(-60,-30){
            \put(0,0){$(n a + \delta_a \mathbf{+ 1}, n l + \delta_l)$}
          }
          \put(80,-30){
            \put(0,0){$(n a + \delta_a, n l + \delta_l \mathbf{+ 1})$}
          }
          \put(-20,-60){
            \put(0,0){$((n+1) a + \delta_a \mathbf{+ 1}, (n+1) l + \delta_l \mathbf{+ 1})$}
            \put(45,10){\line(-1,1){15}}
            \put(95,10){\line(1,1){15}}
          }
        }
      \end{picture}
    \end{align}

    There are three paths, joining the top and the bottom vertex of this piece,
    let's call them $\alpha$, $\beta$ and $\gamma$
    \begin{align}
      \begin{picture}(300,80)(-30,-80)
        \thicklines
        \put(0,-10){
          \put(0,5){$\alpha$}
          \put(0,0){\line(-1,-1){20}}
          \put(-20,-20){\line(1,-1){40}}
        }
        \put(100,-10){
          \put(8,5){$\beta$}
          \put(0,0){\line(1,-1){20}}
          \put(20,-20){\line(-1,-1){20}}
          \put(0,-40){\line(1,-1){20}}
        }
        \put(200,-10){
          \put(18,5){$\gamma$}
          \put(0,0){\line(1,-1){40}}
          \put(40,-40){\line(-1,-1){20}}
        }
      \end{picture}
    \end{align}
    Hence, the $3 \times 3$ submatrices of the R-matrices $\R^{(n-1)}$ and $\R^{(n)}$ read
    \begin{align}
      \R^{(n-1)}_{\alpha\beta\gamma} = 
      \left ( \begin{array}{ccc}
        s_{11} & s_{12} & 0 \\
        s_{12} & s_{22} & 0 \\
        0 & 0 & e_L
      \end{array} \right ) \\ \notag
      \R^{(n)}_{\alpha\beta\gamma} = 
      \left ( \begin{array}{ccc}
        e_L & 0 & 0 \\
        0 & r_{11} & r_{12} \\
        0 & r_{12} & r_{22}
      \end{array} \right ),
    \end{align}
    where we've denoted by $s_{ij}$ the matrix elements of the $2 \times 2$ block from
    the level $n-1$ that we already know by the assumption of recursion,
    and by $r_{ij}$ the matrix elements of the new $2 \times 2$ block
    from level $n$ that we want to find.

    Of course, there is another possible recursion step, for which one should consider
    the following piece of the Young graph (we draw it only symbolically, but the one-hook
    diagrams standing in the vertices should be straightforward to restore)
    \begin{align} \label{eq:arm-step-subgraph}
      \begin{picture}(300,80)(-150,-80)
        \thicklines
        \put(0,0){\line(-1,-1){40}}
        \put(0,0){\line(1,-1){20}}
        \put(-20,-20){\line(1,-1){20}}
        \put(-40,-40){\line(1,-1){20}}
        \put(20,-20){\line(-1,-1){40}}
      \end{picture}
    \end{align}
    For which the corresponding three paths are
    \begin{align}
      \begin{picture}(300,80)(-40,-80)
        \thicklines
        \put(0,-10){
          \put(-22,5){$\alpha$}
          \put(0,0){\line(-1,-1){40}}
          \put(-40,-40){\line(1,-1){20}}
        }
        \put(100,-10){
          \put(-12,5){$\beta$}
          \put(0,0){\line(-1,-1){20}}
          \put(-20,-20){\line(1,-1){20}}
          \put(0,-40){\line(-1,-1){20}}
        }
        \put(200,-10){
          \put(-2,5){$\gamma$}
          \put(0,0){\line(1,-1){20}}
          \put(20,-20){\line(-1,-1){40}}
        }
      \end{picture}
    \end{align}
    and the block structure of the relevent R-matrices' blocks therefore is
    \begin{align}
      \R^{(n-1)}_{\alpha\beta\gamma} = & \
      \left ( \begin{array}{ccc}
        e_A & 0 & 0 \\
        0 & s_{11} & s_{12} \\
        0 & s_{12} & s_{22}
      \end{array} \right ) \\ \notag
      \R^{(n)}_{\alpha\beta\gamma} = & \
      \left ( \begin{array}{ccc}
        r_{11} & r_{12} & 0 \\
        r_{12} & r_{22} & 0 \\
        0 & 0 & e_A
      \end{array} \right ),
    \end{align}

    Regardless of whether we use the subgraph \eqref{eq:leg-step-subgraph} or \eqref{eq:arm-step-subgraph}
    to obtain the extra equation, the following steps are very similar, so we do them explicitly
    only for \eqref{eq:arm-step-subgraph}.

    With help of the Yang-Baxter equation (again, we use only the $1,1$ matrix element of it)
    \begin{align}
      \R^{(n-1)}_{\alpha\beta\gamma} \R^{(n)}_{\alpha\beta\gamma}  \R^{(n-1)}_{\alpha\beta\gamma} =
      \R^{(n)}_{\alpha\beta\gamma} \R^{(n-1)}_{\alpha\beta\gamma}  \R^{(n)}_{\alpha\beta\gamma}
    \end{align}
    we get the closed system of equations for the matrix elements $r_{ij}$
    \begin{align} \label{eq:the-three-equations}
      e_A + e_L = r_{11} + r_{22} \\ \notag
      e_A^2 + e_L^2 = r_{11}^2 + r_{22}^2 + 2 r_{12}^2 \\ \notag
      e_A r_{11}^2 + s_{11} r_{12}^2 = e_A^2 r_{11},
    \end{align}
    from which one obtains an equation for $r_{11}$
    \begin{align}
      r_{11}^2 \left( 2 - \frac{2}{s_{11}} e_A \right) - r_{11} \left ( 2 (e_A + e_L) - \frac{2}{s_{11}} e_A^2 \right ) + 2 e_A e_L = 0
    \end{align}
    Given $e_A$ and $e_L$, this equation expresses $r_{11}$ only through $s_{11}$, that is it is a proper recursion
    for the $1,1$ matrix element of the $2 \times 2$ block.
    Moreover, the equation is \textit{homogeneous}: if one multiplies $r_{11}$, $s_{11}$, $e_A$ and $e_L$
    by same constant, the equation will not change.

    It is, therefore, an elementary check, that
    \begin{align}
      r_{11}^{(n)} = (-1)^{l+1} q^{2 a^2 + 2 a - 2 l^2 - 2 l} \cdot q^{-n (a + l + 1)} \frac{1}{[n]_\bullet}
    \end{align}
    is the solution to this equation satisfying the initial condition (the $1,1$ matrix element
    from Lemma~\ref{lem:the-very-first-doublet})
    \begin{align}
      r_{11}^{(2)} = (-1)^{l+1} q^{2 a^2 + 2 a - 2 l^2 - 2 l} \cdot q^{- 2 (a + l + 1)} \frac{1}{[2]_\bullet}
    \end{align}
    The first equation in \eqref{eq:the-three-equations} immediately gives us
    \begin{align}
      r_{22}^{(n)} = (-1)^{l} q^{2 a^2 + 2 a - 2 l^2 - 2 l} \cdot q^{-n (a + l + 1)} \frac{1}{[n]_\bullet}
    \end{align}
    and from the second equation in \eqref{eq:the-three-equations} we get
    \begin{align}
      r_{12}^{(n)} = q^{2 a^2 + 2 a - 2 l^2 - 2 l} \cdot  \frac{\sqrt{[n+1]_\bullet}}{[n]_\bullet}
    \end{align}

    Considering, in a completely analogous way, the system of equation
    corresponding to the subgraph \eqref{eq:leg-step-subgraph} we obtain the statement of the lemma,
    because we can reach any $2 \times 2$ block from the starting block \eqref{eq:picture-of-first-square}
    in a finite number of steps \eqref{eq:leg-step-subgraph} and \eqref{eq:arm-step-subgraph}.
  \end{proof}
}

{\section{Topological Schurs for 1-hook Young diagrams} \label{sec:topological-schurs}

  Let us consider the structure of the main formula \eqref{eq:homfly-definition}
  for HOMFLY (and Alexander) polynomial in the Reshetikhin-Turaev approach.

  Each summand, corresponding to a partition $\mu$ has two factors. The latter one
  is the trace of the product of $\R$-matrices in the multiplicity space.
  The former is the ratio of Schur polynomials, evaluated at the, so-called, topological locus.

  In this section we focus on the Schur polynomials' ratio.
  At $A = 1$ (i.e. for Alexander polynomial case)
  the numerator has a zero. The order of this zero is equal to the length of the main diagonal of the Young diagram $\mu$
  (see formula \eqref{eq:topological-locus-schur}).
  If the main diagonal of $\lambda$ (the partition in the denominator Schur polynomial)
  is shorter than $\mu$'s the corresponding summand vanishes.
  If they are equal, the ratio of Schur functions considerably simplifies (see below).
  Finally, if it is longer, there is a potential pole
  at $A = 1$, so the requirement of colored Alexander polynomial
  finiteness gives non-trivial relations between Vassiliev invariants
  (see \cite{paper:MST-a-new-symmetry-of-colored-alexander}
  for progress in this direction).

  In this paper we are considering only one-hook diagrams $\lambda$,
  so the third case does not arise, and the goal of this section is to prove the following
  \begin{lemma} \label{lem:ratio-of-schurs}
    Suppose $\lambda$ is a one-hook Young diagram $(a,l)$. When $\mu$ is also one-hook Young diagram $(r, s)$
    we have for the Schur polynomials' ratio at the Alexander topological locus \eqref{eq:quantum-dimension-is-schur}
    \begin{align}
      \frac{s_\mu^* (q,A)}{s_\lambda^* (q,A)} \Bigg{|}_{A=1} = (-1)^{l + s} \frac{[|\lambda|]}{[m|\lambda|]}
    \end{align}
    and when $\mu$ is more than one-hook, such ratio vanishes.
    Recall that $m$ is the number of strands and $|\mu| = m |\lambda|$.
  \end{lemma}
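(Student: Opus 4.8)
The plan is to exploit the explicit product formula for the Schur polynomial at the topological locus. Recall that
\begin{align} \label{eq:topological-locus-schur}
  s_\mu^*(q,A) = \prod_{(i,j) \in \mu} \frac{A q^{j-i} - A^{-1} q^{i-j}}{q^{h(i,j)} - q^{-h(i,j)}},
\end{align}
where $h(i,j)$ is the hook length of the box $(i,j)$. First I would observe that a factor in the numerator vanishes at $A = 1$ precisely when $j - i = i - j$, i.e. on the main diagonal $i = j$; hence the order of the zero of $s_\mu^*$ at $A=1$ equals the length of the main diagonal of $\mu$. Comparing numerator and denominator of $s_\mu^*/s_\lambda^*$, the ratio is finite and nonzero at $A=1$ exactly when $\mu$ and $\lambda$ have main diagonals of the same length, and it vanishes when $\mu$'s main diagonal is strictly longer. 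Since $\lambda$ is one-hook, its main diagonal has length one, so the ratio vanishes unless $\mu$'s main diagonal also has length one, that is, unless $\mu$ is one-hook — this gives the second assertion of the lemma.

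For the first assertion, with $\lambda = (a,l)$ and $\mu = (r,s)$ both one-hook, I would take the limit $A \to 1$ in the ratio directly. The single diagonal box $(1,1)$ contributes, in the numerator of $s_\mu^*$, the factor $A - A^{-1}$, which vanishes linearly; applying L'Hôpital (or simply factoring out $A - A^{-1}$ from both numerators) reduces the computation to
\begin{align}
  \frac{s_\mu^*}{s_\lambda^*}\Bigg|_{A=1}
  = \frac{\displaystyle\prod_{\substack{(i,j)\in\mu\\ i\ne j}} (q^{j-i} - q^{i-j})}{\displaystyle\prod_{(i,j)\in\mu} (q^{h(i,j)} - q^{-h(i,j)})}
  \cdot \frac{\displaystyle\prod_{(i,j)\in\lambda} (q^{h(i,j)} - q^{-h(i,j)})}{\displaystyle\prod_{\substack{(i,j)\in\lambda\\ i\ne j}} (q^{j-i} - q^{i-j})}.
\end{align}
The key step is then to evaluate these products explicitly for a one-hook diagram $(a,l)$: the content values $j - i$ over off-diagonal boxes run over $\{1, 2, \dots, a\} \cup \{-1, -2, \dots, -l\}$, and the hook lengths run over $\{a+l+1\} \cup \{1,\dots,a\} \cup \{1,\dots,l\}$ (the corner box has hook length $a+l+1$, the arm boxes have hooks $1,\dots,a$ read off the top row, the leg boxes $1,\dots,l$). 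Feeding these in, the content product for $(a,l)$ is $(-1)^l \prod_{k=1}^a (q^k - q^{-k}) \prod_{k=1}^l (q^k - q^{-k})$ and the hook product is $(q^{a+l+1} - q^{-(a+l+1)})\prod_{k=1}^a(q^k-q^{-k})\prod_{k=1}^l(q^k-q^{-k})$, so almost everything cancels and the off-diagonal-content-over-hook ratio for $(a,l)$ collapses to $(-1)^l/(q^{a+l+1} - q^{-(a+l+1)})$. Doing the same for $\mu = (r,s)$ and taking the quotient gives
\begin{align}
  \frac{s_\mu^*}{s_\lambda^*}\Bigg|_{A=1}
  = (-1)^{l+s}\,\frac{q^{a+l+1} - q^{-(a+l+1)}}{q^{r+s+1} - q^{-(r+s+1)}}
  = (-1)^{l+s}\,\frac{[\,|\lambda|\,]}{[\,|\mu|\,]},
\end{align}
and since $|\mu| = m|\lambda|$ this is $(-1)^{l+s}[\,|\lambda|\,]/[\,m|\lambda|\,]$, as claimed.

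The main obstacle is purely bookkeeping: correctly reading off the multiset of contents and the multiset of hook lengths of a one-hook diagram, and being careful that the factor $A - A^{-1}$ from the corner box is the unique source of the zero (so that dividing it out of both $s_\mu^*$ and $s_\lambda^*$ is legitimate and leaves a genuine $A \to 1$ limit rather than a $0/0$). Once the two multisets are pinned down, every step is a telescoping cancellation, and the sign $(-1)^{l+s}$ emerges from the $l$ (resp. $s$) negative contents $-1,\dots,-l$ (resp. $-1,\dots,-s$). I would also remark briefly why the non-one-hook case really produces a zero and not an indeterminate form: a two-or-more-hook $\mu$ has at least two diagonal boxes, hence the numerator of $s_\mu^*$ vanishes to order $\ge 2$ while $s_\lambda^*$ vanishes to order exactly $1$, so the ratio $\to 0$.
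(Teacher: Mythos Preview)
Your proof is correct and follows essentially the same approach as the paper: both use the hook-content product formula \eqref{eq:topological-locus-schur}, read off the multisets of contents $\{0,1,\dots,a,-1,\dots,-l\}$ and hook lengths $\{|\lambda|,1,\dots,a,1,\dots,l\}$ for a one-hook diagram, and observe the telescoping cancellation that leaves only the sign $(-1)^{l+s}$ and the corner hook-length factors $[|\lambda|]/[|\mu|]$. Your treatment is slightly more explicit about the $A\to 1$ limit and the vanishing in the non-one-hook case, but otherwise matches the paper's argument step for step.
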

  \begin{proof}
    Indeed, the Schur polynomial at the topological locus is manifestly given by the product over boxes
    of its Young diagram
    \begin{align} \label{eq:topological-locus-schur}
      s_\nu^* (q,A) = \prod_{(i,j) \in \nu}
      \frac{A q^{j-i} - (A q^{j-i})^{-1}}{q^{\text{hook}(i,j)} - (q^{\text{hook}(i,j)})^{-1}},
    \end{align}
    where $\text{hook}(i,j)$ is the hook length of the box $(i,j)$.

    Let's write down explicitly the value of content $j-i$ and hook length for all boxes
    (see \eqref{eq:one-hook-partition} for the orientation of the axes)
    \begin{align}
      \begin{picture}(300,260)(0,-250)
        \put(0,0){Contents for diagram $\lambda$:}
        \put(20,-10){
          \put(0,0){\boxPic \put(7,-13){$0$}}
          \put(0,-20){\boxPic \put(4,-13){$-1$}}
          \put(0,-40){\boxPic \put(4,-13){$-2$}}
          \put(0,-80){\boxPic \put(4,-13){$-l$}}
          \put(20,0){\boxPic \put(7,-13){$1$}}
          \put(40,0){\boxPic \put(7,-13){$2$}}
          \put(80,0){\boxPic \put(7,-13){$a$}}
          \put(65,-10){$\dots$}
          \put(5,-70){$\dots$}
        }
        \put(140,0){
          \put(0,0){Contents for diagram $\mu$:}
          \put(20,-10){
            \put(0,0){\boxPic \put(7,-13){$0$}}
            \put(0,-20){\boxPic \put(4,-13){$-1$}}
            \put(0,-40){\boxPic \put(4,-13){$-2$}}
            \put(0,-80){\boxPic \put(4,-13){$-s$}}
            \put(20,0){\boxPic \put(7,-13){$1$}}
            \put(40,0){\boxPic \put(7,-13){$2$}}
            \put(80,0){\boxPic \put(7,-13){$r$}}
            \put(65,-10){$\dots$}
            \put(5,-70){$\dots$}
          }
        }
        \put(0,-140){
          \put(0,0){Hook lengths for diagram $\lambda$:}
          \put(20,-10){
            \put(0,0){\boxPic \put(3,-13){$|\lambda|$}}
            \put(0,-20){\boxPic \put(7,-13){$l$}}
            \put(0,-60){\boxPic \put(7,-13){$2$}}
            \put(0,-80){\boxPic \put(7,-13){$1$}}
            \put(20,0){\boxPic \put(7,-13){$a$}}
            \put(60,0){\boxPic \put(7,-13){$2$}}
            \put(80,0){\boxPic \put(7,-13){$1$}}
            \put(45,-10){$\dots$}
            \put(5,-50){$\dots$}
          }
          \put(140,0){
            \put(0,0){Hook lengths for diagram $\mu$:}
            \put(20,-10){
              \put(0,0){\boxPic \put(3,-13){$|\mu|$}}
              \put(0,-20){\boxPic \put(7,-13){$r$}}
              \put(0,-60){\boxPic \put(7,-13){$2$}}
              \put(0,-80){\boxPic \put(7,-13){$1$}}
              \put(20,0){\boxPic \put(7,-13){$s$}}
              \put(60,0){\boxPic \put(7,-13){$2$}}
              \put(80,0){\boxPic \put(7,-13){$1$}}
              \put(45,-10){$\dots$}
              \put(5,-50){$\dots$}
            }
          }
        }
      \end{picture}
    \end{align}
    From this it is obvious that:
    \begin{itemize}
    \item Numerator has simple zero, coming from the corner box of the diagram $\mu$
      but it cancels with simple pole, coming from the corner box of the diagram $\lambda$.
    \item Contributions of the contents of the arm of the diagram $\lambda$ cancel with the contributions
      of the hook length of the arm of the diagram $\lambda$.
      Similarly, leg contributions cancel as well, giving a factor $(-1)^l$.
      Analogously, arm and leg boxes of diagram $\mu$ give a factor $(-1)^s$.
    \item The only non-trivial contributions are coming from hook lengths of the corner boxes
      of diagrams $\lambda$ and $\mu$ and we are left with the desired expression
      (recall that $|\mu| = m |\lambda|$)
      \begin{align}
        \frac{s_\mu^* (q,A)}{s_R^* (q,A)} \Bigg{|}_{A=1} = (-1)^{l + s} \frac{[|R|]}{[m|R|]}
      \end{align}
    \end{itemize}
  \end{proof}
}
  
{\section{The proof of the main theorem} \label{sec:main-proof}
  \begin{proof}
    Now we are in position to prove the Theorem~\ref{thm:alexander-trivialization}
    -- the main theorem of the present paper.

    Consider the Alexander polynomial definition \eqref{eq:alexander-definition}.
    
    From Lemma~\ref{lem:ratio-of-schurs} we see that the ratio of quantum dimensions
    depends only on $|\lambda|$, up to a sign that can be absorbed into framing factor.
    
    Lemmas~\ref{lem:tensor-square-eigenvalues}, \ref{lem:the-very-first-doublet} and \ref{lem:arbitrary-2-by-2-block},
    together with the Lemma~\ref{lem:young-graph-structure} about the Young graph structure
    together imply, that R-matrices $\mathcal{R}^{(i)}_{\lambda,\mu}$
    (again, up to a framing factor) depend only on $|\lambda|$, moreover,
    this dependence is obtained from the one for fundamental representation by
    a simple substitution $q \rightarrow q^{|\lambda|}$.
    Hence, the trace of the arbitrary product (corresponsing to a braid)
    of the operators $\mathcal{R}^{(i)}_{\lambda,\mu}$
    is also related to the fundamental one by the substitution $q \rightarrow q^{|\lambda|}$.

    Combined, these statements prove the theorem.
  \end{proof}
}




\bibliographystyle{mpg}
\bibliography{references_alexander-factors}

\end{document}